\documentclass[a4paper,fleqn,numbers]{cas-sc}

\usepackage{amsmath}
\usepackage{amsfonts}
\usepackage{amssymb}
\usepackage{subcaption}
\usepackage{graphicx}
\usepackage{arydshln}

\usepackage[numbers]{natbib}

\def\tsc#1{\csdef{#1}{\textsc{\lowercase{#1}}\xspace}}
\tsc{WGM}
\tsc{QE}

\newtheorem{theorem}{Theorem}
\newtheorem{corollary}{Corollary}[theorem]
\newtheorem{assumption}{Assumption}

\newtheorem{propo}{Proposition}
\newdefinition{rmk}{Remark}
\newproof{pf}{Proof}

\begin{document}
\let\WriteBookmarks\relax
\def\floatpagepagefraction{1}
\def\textpagefraction{.001}

\shorttitle{}    

\shortauthors{}  

\title [mode = title]{Synchronization of directed hypergraphs with heterogeneities via dynamic coupling}  



%

\author[1]{Aiwin {Thomas Vadakkan}}[orcid=0009-0000-5493-8385]





\credit{Conceptualization, Methodology, Writing, Software}

\affiliation[1]{organization={University of Naples Federico II},
            addressline={Via Claudio, 21}, 
            city={Naples},
            postcode={80125}, 
            country={Italy}}

\author[1]{Pietro {De Lellis}}
\cormark[1]

\ead{pietro.delellis@unina.it}


\credit{Conceptualization, Supervision, Methodology, Writing}

\author[1,2]{Mario {di Bernardo}}
\cormark[1]

\ead{mario.dibernardo@unina.it}


\credit{Conceptualization, Supervision, Methodology, Writing}

\affiliation[2]{organization={Scuola Superiore Meridionale},
            addressline={Via Mezzocannone, 4}, 
            city={Naples},
            postcode={80138}, 
            country={Italy}}
            
\cortext[1]{Corresponding authors}



\begin{abstract}
Many real-world networks involve interactions among three or more agents that cannot be reduced to pairwise coupling, making hypergraphs a natural modeling framework. In this work, we study complete synchronization in directed hypergraphs of nonlinear agents with parameter mismatches under dynamic diffusive coupling. Although proportional-integral coupling schemes are known to achieve consensus in heterogeneous linear networks, their ability to enforce complete synchronization in nonlinear systems is more limited, generally yielding only bounded synchronization. We show that complete synchronization can be attained only when the effect of parameter mismatches is structurally equivalent, in the transverse dynamics, to a constant disturbance. Under this condition, we derive invariance and local stability conditions for the synchronization manifold and develop a Master Stability Function framework for directed hypergraphs with dynamic coupling. The proposed approach accounts for distinct proportional and integral hypergraph layers and provides spectral criteria for predicting synchronization regions. Numerical simulations on directed hypergraphs of Lorenz oscillators validate the theoretical predictions and show how the integral action can compensate destabilizing effects induced by the proportional layer. We further demonstrate the applicability of the framework to a pinning-control problem in nonlinear opinion dynamics, where dynamic diffusive coupling achieves complete leader-follower consensus in the presence of heterogeneity.

\end{abstract}

\begin{keywords}
 Synchronization \sep Higher-order Interactions \sep Dynamic Diffusive coupling \sep Master Stability Function \sep Directed hypergraphs \sep Pinning control 
\end{keywords}

\maketitle

\section{Introduction}\label{Introduction}

Complex systems are frequently modeled as networks, where nodes represent individual agents and edges denote their mutual interactions~\cite{strogatz_exploring_2001}. From the local interactions between interconnected agents, collective behaviors emerge, including e.g.  synchronization~\cite{pikovsky_synchronization_2001}, amplitude chimeras~\cite{scholl_partial_2022}, and consensus~\cite{saber_consensus_2003}. Synchronization and consensus, characterized by the asymptotic convergence of agent trajectories to a time-varying solution or equilibrium, respectively, remain the most extensively investigated behaviors due to their critical role in diverse engineering and biological applications. 
Synchronization is fundamental to neural network dynamics~\cite{makovkin_synchronization_2021,singer_neuronal_2018}, as exemplified by hub-driven remote synchronization, where spatially distinct cortical regions achieve coherence via intermediary hubs~\cite{vlasov_hub-driven_2017}. Beyond neuroscience, these phenomena govern diverse biological systems, ranging from the emergence of flocking in bird populations~\cite{sar_flocking_2023} to the collective firing of fireflies~\cite{mccrea_model_2022}. Consensus protocols are equally vital in engineering contexts such as frequency synchronization in power grids~\cite{hill_power_2006,dorfler_synchronization_2010}, distributed formation control in autonomous multi-robot ensembles~\cite{fax_information_2004} and heating, ventilation and air conditioning (HVAC) systems~\cite{weimer_active_2012}.

While graph-based frameworks have traditionally been employed to model complex systems, it is increasingly evident that many real-world networks exhibit interactions among three or more agents that are fundamentally irreducible to pairwise couplings \cite{boccaletti_structure_2023,Millan2025TopologyHODynamics}, a phenomenon particularly evident in opinion dynamics~\cite{lanchier_stochastic_2013}. Moreover, constraints on sensing and actuation often render traditional diffusive coupling infeasible. For instance, when only an aggregated state of groups of nodes can be measured, the feedback signals take the form of a directed multibody interaction \cite{de_lellis_pinning_2023}. 
These considerations led to the adoption of hypergraphs~\cite{boccaletti_structure_2023} to effectively model such higher-order interactions. The investigation of higher-order network dynamics has gained significant traction in the last decade, with applications also spanning epidemic spreading~\cite{bodo_sis_2016} and social contagion processes~\cite{iacopini_simplicial_2019}.

Beyond neglecting higher-order interactions, the classical literature on synchronization and consensus typically made the simplifying assumption of identical node dynamics. However, in several applications, the assumption of agent homogeneity often fails due to inherent parameter mismatches, environmental noise, and external disturbances. Under these conditions,  diffusive coupling is insufficient to achieve complete synchronization; instead, it yields bounded synchronization, where the synchronization error does not vanish asymptotically but remains uniformly bounded by some constant. For instance, Zhao et al. ~\cite{zhao_global_2012} proved global bounded synchronization of nonidentical nodes given that the average vector field is bounded,  while the case of discontinuous nonidentical nodes was tackled in~\cite{delellis_convergence_2015}. Parallel challenges exist in consensus theory, where diffusive strategies only guarantee a bounded steady-state error in the presence of nodal heterogeneity and disturbances~\cite{zhong_global_2012,wang_bounded_2013,kim_practical_2012}.

To address the limitations of diffusive coupling, dynamic coupling strategies - specifically Proportional-Integral (PI) control schemes - have been proposed. These frameworks augment standard diffusive coupling (that can be viewed as a proportional action) with an integral action to mitigate the effects of disturbances and nodal heterogeneities. For instance, Freeman \textit{et al.}~\cite{freeman_stability_2006} demonstrated that consensus is attainable in homogeneous linear systems subject to constant disturbances via PI control, provided that the proportional and integral topologies are identical and their respective coupling gains are independently tunable. Subsequently, these constraints were relaxed by introducing a distributed PI control strategy that facilitates consensus in networks of heterogeneous linear systems under constant perturbations~\cite{burbano_lombana_multiplex_2016}. Further extensions and applications of PI-based coordination can be found in~\cite{andreasson_distributed_2014,carli_pi_2008,wang_distributed_2019}, spanning from network congestion control~\cite{shang_decentralized_2004} to the stabilization of modern power systems~\cite{andreasson_distributed_2014,abo-elkhair_enhancing_2025}. While dynamic coupling (PI) facilitates consensus in networks subject to disturbances and parameter heterogeneities, its efficacy in achieving complete synchronization within heterogeneous nonlinear networks is restricted; in general, such coupling schemes yield only bounded synchronization~\cite{burbano_lombana_synchronization_2016}.



The present work applies the PI dynamic coupling to cope with nonidentical node dynamics in the presence of higher-order interactions described by directed hypergraphs using the Master Stability Function (MSF) approach, which was originally formulated for undirected networks of homogeneous agents~\cite{pecora_master_1998}. The MSF facilitates the evaluation of the
local transverse stability of the synchronization manifold by determining whether perturbations transverse to the
manifold decay or amplify asymptotically. 
The MSF framework has been significantly extended over the past decades to individually accommodate parameter mismatches~\cite{sun_master_2009}, directed topologies~\cite{nishikawa_synchronization_2006}, multilayer architectures~\cite{del_genio_synchronization_2016}, dynamic coupling on graphs~\cite{burbano_lombana_synchronization_2016}, and higher-order structures~\cite{della_rossa_emergence_2023,Gallo2022DirectedHOI}. However, there is no current MSF framework that simultaneously addresses parameter mismatch, dynamic coupling and directed higher order interactions. Hence, we establish a comprehensive theoretical framework to study the onset of spontaneous or controlled synchronization in hypergraphs governed by dynamic diffusive coupling, with particular emphasis on complete synchronization of agents with parameter mismatches. Specifically, our contributions are: 1) we derive analytical conditions for the invariance and local stability of the synchronization manifold for agents with parameter mismatches interacting via directed hypergraphs, 2) we use the theoretical results to define a MSF framework for directed hypergraphs accommodating both parameter mismatches and dynamic coupling which enables the precise prediction of synchronization regions for arbitrary higher-order network topologies (detailed in Section~\ref{Results}), 3) we test the effectiveness of our approach on testbed numerical applications, namely  Erd\H{o}s-R\'enyi (ER) like directed hypergraphs of Lorenz systems, and a nonlinear opinion dynamics model adapted from \cite{rizzello_pinning_2024}. The numerical analyses confirm the effectiveness of the approach in identifying the conditions on the coupling and control gains for which complete synchronization can be attained, and underline the relevance of the integral action.

The remainder of the paper is organized as follows. Section 2 introduces the notation and preliminary concepts on directed hypergraphs used throughout the paper. Section 3 formulates the synchronization problem for directed hypergraphs with parameter mismatches under dynamic diffusive coupling. Section 4 presents the main theoretical results, including the conditions for invariance of the synchronization manifold and the Master Stability Function framework for assessing its local transverse stability. Section 5 validates the theoretical findings through numerical simulations on networks of Lorenz oscillators and illustrates the applicability of the proposed approach to a pinning-control problem in opinion dynamics. Finally, Section 6 summarizes the main conclusions and outlines possible directions for future research.

\section{Preliminaries} \label{Preliminaries}

Given $n \in \mathbb{Z}^{+}$, $I_n \in \mathbb{R}^{n \times n}$ is the identity matrix, $\mathbf{0}_n \in \mathbb{R}^{n \times n}$ is the matrix of zeros, $0_n \in \mathbb{R}^n$ is the vector of zeros and $1_n \in \mathbb{R}^n$ is the vector of ones. Given a vector $v \in \mathbb{R}^n$, $\mathrm{diag(v)} \in \mathbb{R}^{n \times n}$ represents a diagonal matrix whose $i$th diagonal entry is the $i$th element of the vector $v$. Given $v_1,\dots v_p \in \mathbb{R}^n$, we denote $[v_1,\dots,v_p] \in \mathbb{R}^{n \times p}$ and  $[v_1;\dots;v_p] \in \mathbb{R}^{np}$ to be their horizontal and vertical concatenations, respectively.

Given a complex number $c \in \mathbb{C}$, $\Re(c)$ and $\Im(c)$ correspond to the real and imaginary part of $c$, respectively. Given a matrix $ A \in \mathbb{R}^{n \times n}$, $\mathrm{spec(A)}$ is its spectrum, and $\lambda_{\min}(A)$ and $\lambda_{\max}(A)$ denote the eigenvalues with the  smallest and largest real part, respectively. For any two matrices $M_1 \in \mathbb{R}^{a \times b}$ and $M_2 \in \mathbb{R}^{c \times d}$, $M_1$ $\otimes$ $M_2  \in \mathbb{R}$$^{ac \times bd}$ denotes their Kronecker product. Finally given a vector field $f(a,b): \mathbb{R}^p \times \mathbb{R}^q \rightarrow \mathbb{R}^n$, $\mathbf{J}_a f \in \mathbb{R}^{n \times p}$ and  $\mathbf{J}_b f \in \mathbb{R}^{n \times q}$ are the Jacobian of the vector field $f$ with respect to $a$ and $b$, respectively.

\subsection{Hypergraphs} \label{Preliminaries - Directed hypergraphs}

An undirected hypergraph \cite{boccaletti_structure_2023} is a pair $\mathcal{H} = (\mathcal{V}, \mathcal{E})$ with $\mathcal{V} = \{ {v}_1,\dots,{v}_N\}$ being the set of nodes and $\mathcal{E} = \{h_1,\dots,h_{|\mathcal{E|}}\}$ the family of non-empty subsets of the elements of $\mathcal{V}$. These subsets are termed as hyperedges and they naturally encode higher order interactions i.e  whenever $|h| > 2$. The rank of a  hypergraph is defined as the maximum  cardinality of its hyperedges: $\mathrm{rank}(\mathcal{H}) = \max_{h \in \mathcal{E}}\{|h|\}$. Similarly, the co-rank is defined as the minimum cardinality of its hyperedges: $\mathrm{co}$-$\mathrm{rank}(\mathcal{H}) = \min_{h \in \mathcal{E}}\{|h|\}$. If $\mathrm{rank}(\mathcal{H}) = \mathrm{co}$-$\mathrm{rank}(\mathcal{H}) = k$, then the hypergraph is called k-uniform hypergraph.

A directed hypergraph \cite{gallo_directed_1993} is a pair $(\mathcal{V},\mathcal{E})$, with $\mathcal{V} = \{ {v}_1,\dots,{v}_N\}$ being the set of nodes and $\mathcal{E}$ the set of directed hyperedges. A directed hyperedge $(e)$ is a pair of ordered, disjoint subsets of $\mathcal{V}$ such that the first set denotes the tails, $\mathcal{T}(e)$, and the second set denotes the heads, $\mathcal{H}(e)$, whose cardinality is $|\mathcal{T}(e)|$ and $|\mathcal{H}(e)|$, respectively. Given two subsets ${\mathcal{V}}_1, {\mathcal{V}}_2 \subset \mathcal{V}$,  we define $\mathcal{E}^{{\mathcal{V}}_1,{\mathcal{V}}_2} = \{e \in \mathcal{E}: \mathcal{V}_1 \subseteq \mathcal{T}(e) \wedge \mathcal{V}_2 \subseteq \mathcal{H}(e) \}$ as the set of hyperedges in $\mathcal E$ whose tail set includes $\mathcal V_1$, and whose head set includes $\mathcal V_2$. Finally, we denote $\mathcal{E}^j = \
\{ e \in  \mathcal{E}: v_j \in \mathcal{H}(e) \}$, which is the subset of hyperedges in $\mathcal{E}$ having $v_j$ as its head. 



\section{Problem Statement}\label{Problem Statement}

We consider the problem of achieving complete synchronization in a higher-order network of $N$ nodes with parameter mismatch, whose dynamics is governed by
\begin{equation} \label{Open loop dynamics}
    \frac{dx_i}{dt} = f(x_i,\mu_i)+\sum_{e \in \mathcal{E}_P^i}(\sigma_e)_P g\left( x_e^{\mathcal{T}} \alpha_e - x_e^{\mathcal{H}}\beta_e \right) + u_i,\qquad i=1,\ldots,N,
\end{equation}
where $f:\mathbb{R}^n \times \mathbb{R}^q \rightarrow\mathbb{R}^n$ is the vector field describing the individual dynamics, $x_i \in \mathbb{R}^n$ is the state vector of node $i$, and $\mu_i \in \mathbb{R}^q$ is the parameter vector associated to node $i$, and $u_i$ is the distributed control input acting on node $i$. The nodes are coupled through a proportional, hyperdiffusive coupling protocol over a directed hypergraph $\{\mathcal V,\mathcal E_P\}$, with $\mathcal V$ being the set of $N$ coupled nodes, and $\mathcal E_P$ is the set of hyperedges interconnecting them. The interactions between interconnected nodes take place through a hyperdiffusive coupling protocol \cite{de_lellis_pinning_2023}, where $g: \mathbb{R}^n \rightarrow\mathbb{R}^n$ is a generic nonlinear coupling function, $\mathcal{E}_P^i$ denotes the set of all the hyperedges in $\mathcal E_p$ with the $i$th node among its heads, and matrices $x_e^{\mathcal{T}}\in \mathbb{R}^{n \times |\mathcal{T}(e)|}$ and $x_e^{\mathcal{H}} \in \mathbb{R}^{n \times |\mathcal{H}(e)|}$ collect the states of the tail and head nodes of hyperedge $e$ and horizontally stack them, respectively; $(\sigma_e)_P$ is the coupling strength of the proportional action associated to hyperedge $e$, whereas the vectors $\alpha_e \in \mathbb{R}^{|\mathcal{T}(e)|}$ and $\beta_e \in \mathbb{R}^{|\mathcal{H}(e)|}$ vertically stack the weights associated with the tail and head nodes of hyperedge $e$, respectively, and are such that  $\alpha_e^{\mathrm{T}} 1_{|\mathcal{T}(e)|} = \beta_e^{\mathrm{T}} 1_{|\mathcal{H}(e)|} = 1$.

The hyperdiffusive coupling protocol in $\eqref{Open loop dynamics}$ implies that each head of a hyperedge receives a signal which is a function of the difference between a convex combination of the states of the tails and a convex combination of the states of the heads of that hyperedge. Moreover, we assume that $g(0) =0$, making this protocol synchronization noninvasive \cite{gambuzza_stability_2021}, a precondition for the invariance of the synchronization manifold.
However, its invariance is prevented by the parameter mismatches between the nodes, thus requiring an appropriate design of the control input $u_i$ to possibly enforce complete synchronization. 

In this work, in order to achieve complete synchronization in such hypergraphs with nonidentical nodes, we propose the use of dynamic diffusive coupling, thereby selecting $u_i$ as
\begin{equation}
    u_i = \sum_{e \in \mathcal{E}_I^i} (\sigma_e)_I\int_0^t g\left( x_e^{\mathcal{T}} \alpha_e - x_e^{\mathcal{H}}\beta_e \right)d\tau,
\end{equation}
where $(\sigma_e)_I$ is the non-zero coupling strength of the integral action associated to hyperedge $e$; $\mathcal E_I$ is the hyperedge set of the directed hypergraph that describes the distributed integral interactions between the nodes in $\mathcal V$, and $\mathcal{E}_I^i$ is the set of all hyperedges in $\mathcal E_I$ with node $i$ among its heads. Hence, with this choice of the control input, the dynamics of the closed loop system can be written as
\begin{align}\label{full dynamics}
\frac{dx_i}{dt} &= f(x_i,\mu_i)+\sum_{e \in \mathcal{E}_P^i}(\sigma_e)_P g\left( x_e^{\mathcal{T}} \alpha_e - x_e^{\mathcal{H}}\beta_e \right) + u_i,  \\
\frac{du_i}{dt} &= \sum_{e \in \mathcal{E}_I^i}(\sigma_e)_Ig\left( x_e^{\mathcal{T}} \alpha_e - x_e^{\mathcal{H}}\beta_e \right),\label{eq:udot}
\end{align}
The closed-loop network can be then represented as a two-layer network on the nodes in $\mathcal V$ as illustrated in Fig.~\ref{Fig.1}, which shows how the proportional and integral layer are defined by the hyperedge sets $\mathcal E_P$ and $\mathcal E_I$, respectively.

\begin{figure}
    \centering
    \includegraphics[width=0.6\linewidth]{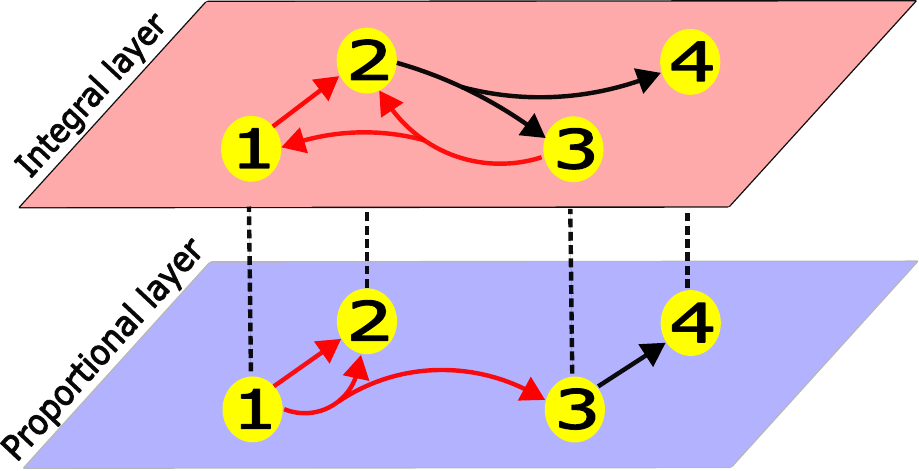}
    \caption{Two-layer directed hypergraph representing the interactions in the closed-loop network \eqref{full dynamics}-\eqref{eq:udot}. The red hyperedges identify all hyperedges having  $2$ as a head, whereby the set of hyperedges influencing node $2$ in the proportional layer is  $\mathcal{E}_P^2 = \left\{ \left( \{1\},\{2\} \right), \left( \{1\},\{2,3\} \right) \right\}$, whereas in the integral layer is $\mathcal{E}_I^2 = \left\{ \left( \{1\},\{2\} \right), \left( \{3\},\{1,2\} \right) \right\}$.}
    \label{Fig.1}
\end{figure}

\section{Results} \label{Results}

Here, we first show a necessary and sufficient condition on the vector field $f$ for the invariance of the synchronization manifold, and then, under the identified condition, we study the local stability of the synchronization manifold.
Let us start by defining the average state trajectory
\begin{equation}
    \quad \bar{x}(t) = \frac{1}{N} \sum_{k=1}^N x_k(t),
\end{equation}
whose dynamics are given by
\begin{equation} \label{average traj dynamics}
    \dot {\bar{x}}(t) = \frac{1}{N} \sum_{k=1}^N \bigg[f(x_k,\mu_k)+\sum_{e \in \mathcal{E}_P^k}(\sigma_e)_P g\left( x_e^{\mathcal{T}} \alpha_e - x_e^{\mathcal{H}}\beta_e \right) + u_k \bigg] \\
\end{equation}
We can now introduce the error of the $i$th node as the deviation of its state from $\bar x(t)$, that is, 
\begin{equation} \label{error}
\delta x_i(t) = x_i(t) - \bar{x}(t)
\end{equation}
By combining \eqref{full dynamics} and \eqref{average traj dynamics}, the error dynamics can be written 
\begin{equation}\label{eq:err_dyn}
    \delta \dot x_i =  f(x_i,\mu_i)+\sum_{e \in \mathcal{E}_P^i}(\sigma_e)_P g\left( x_e^{\mathcal{T}} \alpha_e - x_e^{\mathcal{H}}\beta_e \right) + u_i  - \frac{1}{N} \sum_{k=1}^N \bigg[f(x_k,\mu_k)+\sum_{e \in \mathcal{E}_P^k}(\sigma_e)_P g\left( x_e^{\mathcal{T}} \alpha_e - x_e^{\mathcal{H}}\beta_e \right) + u_k \bigg]
\end{equation}
Next, we recall from \cite[Lemma 1]{rizzello_pinning_2024} that for any hyperedge $e \in \mathcal{E}$, for all $i \in \mathcal{H}(e)$, the following equality holds:
\begin{equation}   \label{lemma 1}
    x_e^{\mathcal{T}} \alpha_e - x_e^{\mathcal{H}} \beta_e = \sum_{j \in \mathcal{T}(e)} (\tilde \alpha_e)_j(x_j-x_i) - \sum_{j \in \mathcal{H}(e)} (\tilde \beta_e)_j(x_j-x_i),
\end{equation}
where the $j$th entry of $ \tilde\alpha_e$ ($\tilde \beta_e$) is 0 if node $j$ is not a tail (head), whereas it is equal to the weight associated to the corresponding tail (head), otherwise. By combining \eqref{eq:err_dyn} with \eqref{lemma 1}, the error dynamics can be rewritten as
\begin{multline}\label{error dynamics 2}
    \delta \dot x_i = f(x_i, \mu_i) - \frac{1}{N} \sum_{k=1}^N  f(x_k, \mu_k)+  \sum_{e \in \mathcal{E}_P^i} \Bigg[(\sigma_e)_p g\left(\sum_{j \in \mathcal{T}(e)} (\tilde \alpha_e)_j(x_j-x_i) - \sum_{j \in \mathcal{H}(e)} (\tilde \beta_e)_j(x_j - x_i)\right) \Bigg] \\ 
    - \frac{1}{N} \sum_{k=1}^N  \Bigg[\sum_{e \in \mathcal{E}_P^k} \Bigg[(\sigma_e)_p g\left(\sum_{j \in \mathcal{T}(e)} (\tilde \alpha_e)_j(x_j  -x_k)  - \sum_{j \in \mathcal{H}(e)} (\tilde \beta_e)_j(x_j - x_k) \right )  \Bigg] \Bigg]  + \delta u_i  
\end{multline}
where $\delta u_i  = u_i - \frac{1}{N} \sum_{k=1}^N u_k $ is the deviation of the control action at the $i$th node from the average control action; the dynamics of $\delta u_i$ can be computed by applying again \cite[Lemma 1]{rizzello_pinning_2024}, thus obtaining
    \begin{align} 
        \delta \dot u_i &= \sum_{e \in \mathcal{E}_I^i} (\sigma_{e})_I \, g \Biggl( \sum_{j \in \mathcal{T}(e)} (\tilde \alpha_e)_j(x_j-x_i) - \sum_{j \in \mathcal{H}(e)} (\tilde \beta_e)_j(x_j - x_i) \Biggr) \nonumber \\
        &\quad - \frac{1}{N} \sum_{k=1}^N \Bigg[\sum_{e \in \mathcal{E}_I^k} (\sigma_{e})_I \, g \Biggl( \sum_{j \in \mathcal{T}(e)} (\tilde \alpha_e)_j(x_j-x_k) - \sum_{j \in \mathcal{H}(e)} (\tilde \beta_e)_j(x_j - x_k) \Biggr) \Bigg]. \label{eq:dy_dt}
    \end{align}

Now assuming that $\delta x$ is small enough, as well as the parameter mismatch with respect to the average parameter vector $\bar \mu=\sum_{i=1}^N \mu_i/N$, we can use first-order Taylor expansions to approximate the dynamics \eqref{error dynamics 2}-\eqref{eq:dy_dt} as
\begin{subequations}
\begin{align}
    \delta \dot x &=  \mathbf{J}_x f(\bar{x}, \bar{\mu}) \delta x_i + \mathbf{J}_\mu f(\bar{x}, \bar{\mu}) \delta \mu_i 
    + \mathbf{J}_x g(0)\Bigg[\sum_{j=1}^N A^P_{ij}(\delta x_j - \delta x_i)  - \frac{1}{N}\sum_{k=1}^N\sum_{j=1}^N A^P_{kj}(\delta x_j - \delta x_k)\Bigg]  + \delta u_i, \label{error dynamics 3} \\
     \delta \dot u_i &= \mathbf{J}_x g(0) \Bigg[\sum_{j=1}^N A^I_{ij}(\delta x_j - \delta x_i)  - \frac{1}{N}\sum_{k=1}^N\sum_{j=1}^N A^I_{kj}(\delta x_j - \delta x_k)\Bigg], \label{error dynamics 4}
\end{align}    
\end{subequations}
where we have used the fact that $\sum_{k=1}^N\delta x_k = 0_n$, $\sum_{k=1}^N\delta\mu_k=0_q$, and where
\begin{align}\label{signed matrix}
        A^P_{ij} &= \sum_{e \in \mathcal{E}_P^i} \sum_{j \in \mathcal{T}(e)} (\sigma_e)_P(\tilde \alpha_e)_j - \sum_{e \in \mathcal{E}_P^i} \sum_{j \in \mathcal{H}(e)} (\sigma_e)_P(\tilde \beta_e)_j \\
    A^I_{ij} &= \sum_{e \in \mathcal{E}_I^i} \sum_{j \in \mathcal{T}(e)} (\sigma_e)_I(\tilde \alpha_e)_j - \sum_{e \in \mathcal{E}_I^i} \sum_{j \in \mathcal{H}(e)} (\sigma_e)_I(\tilde \beta_e)_j 
\end{align} 
are the elements of the adjacency matrices $A^P$ and $A^I$ of the signed graph associated to the directed hypergraph $\{\mathcal V,\mathcal E_P\}$ and $\{\mathcal V,\mathcal E_I\}$, respectively\footnote{For more details on the association between directed hypergraphs and signed graphs, we refer the reader to \cite{de_lellis_pinning_2023,della_rossa_emergence_2023}.}.  
By introducing the Laplacian matrices $L^P$ and $L^I$ associated with $A^P$ and $A^I$, respectively, the error dynamics given in $\eqref{error dynamics 3}$ and $\eqref{error dynamics 4}$ could then be written as follows:
\begin{subequations}
    \begin{align}
    \delta \dot x_i &= \mathbf{J}_x f(\bar{x}, \bar{\mu}) \delta x_i + \mathbf{J}_\mu f(\bar{x}, \bar{\mu}) \delta \mu_i
    - \mathbf{J}_x g(0) \sum_{j=1}^N \widetilde{L}^{P}_{ij} \delta x_j + \delta u_i,\\
    \delta \dot u_i &=  -\mathbf{J}_x g(0)\sum_{j=1}^N \widetilde{L}^{I}_{ij} \delta x_j,
    \end{align}
\end{subequations}
where $\widetilde{L}^{P} = L^P - 1_N  \left[\frac{1}{N}\sum_{k=1}^NL^P_{k1}, \dots ,\frac{1}{N}\sum_{k=1}^NL^P_{kN}\right] $ and  $\widetilde{L}^{I} = L^I - 1_N \left[\frac{1}{N}\sum_{k=1}^NL^I_{k1}, \dots ,\frac{1}{N}\sum_{k=1}^NL^I_{kN}\right]$. We can then define the dynamics of $\delta x = [\delta x_1; \dots ;\delta x_N] \in \mathbb{R}^{nN}$ and $\delta u = [\delta u_1; \dots ;\delta u_N] \in \mathbb{R}^{nN}$ in compact form as 
\begin{subequations} \label{eq:deltaxu_dot}
\begin{align}
    \delta \dot x &= \bigg[I_N \otimes\mathbf{J}_xf(\bar{x},\bar{\mu})-\widetilde{L}^{P}\otimes \mathbf{J}_xg(0)\bigg]\delta x+\bigg[I_N\otimes\mathbf{J}_\mu  f(\bar{x},\bar{\mu})\bigg]\delta \mu + \delta u,\label{eq:deltax_dot} \\
    \delta \dot u &=  -\bigg[\widetilde{L}^{I}\otimes\mathbf{J}_xg(0)\bigg]\delta x,\label{eq:deltau_dot}
\end{align} 
\end{subequations}
where $\otimes$ is the Kronecker product.

Saying that the synchronization manifold $x_1=\ldots=x_N$ is invariant, means that once the network is synchronized, it will keep the synchronous state forever, that is, when $\delta x=0_{nN}$, also $\delta \dot x=0_{nN}$. 

The hyperdiffusive coupling term vanishes at the synchronization manifold because $x_e^{\mathcal {T}}\alpha_e-x_e^{\mathcal H}\beta_e=0$ and $g(0)=0$ whenever $x_1=\ldots=x_N$. However, when the vector fields differ from node to node, the synchronization manifold need not to be invariant, since the nodal vector fields may induce different derivatives even at identical states. The following proposition characterizes the general condition for invariance
\begin{propo}
The synchronization manidold $x_1=\ldots,=x_N=s$ is invariant for the closed-loop system only if, on the manifold,
\begin{equation}
f(s,\mu_i)+u_i=f(s,\mu_j)+u_j,\qquad i,j=1,\ldots,N
\end{equation}
\begin{pf}
Since the coupling protocol is synchronization noninvasive, the proof trivially follows.
\end{pf}
\end{propo}
We now characterize the condition under which the first-order transverse effect of the parameter mismatch can be exactly compensated by the dynamic integral state.

\begin{theorem}[First-order mismatch compensation]
Consider the closed-loop dynamics \((3)\)--\((4)\), and suppose that the state and parameter deviations from
\((\bar{x},\bar{\mu})\) are sufficiently small for the first-order transverse dynamics \eqref{eq:deltaxu_dot} to be valid.
Then the synchronous subspace \(\delta x=0\) is invariant for the first-order transverse dynamics if and only if, along the synchronous trajectory,
\[
\delta u
=
-\left(I_N\otimes J_{\mu}f(\bar{x},\bar{\mu})\right)\delta\mu .
\]
Moreover, since \(\delta \dot{u}=0\) whenever \(\delta x=0\), this cancellation can be maintained along the synchronous trajectory only if
\[
\left(I_N\otimes J_{\mu}f(\bar{x}(t),\bar{\mu})\right)\delta\mu
\]
is constant in time. In particular, this condition is satisfied if \(J_{\mu}f(x,\mu)\) is constant, in which case the parameter mismatch enters the first-order transverse dynamics as an additive constant disturbance.
\end{theorem}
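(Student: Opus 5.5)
The argument works directly on the linear transverse system \eqref{eq:deltaxu_dot}, in two steps: an algebraic characterization of invariance, then a consistency condition on the integral state.

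\emph{Step 1: invariance.} Invariance of $\delta x=0$ for \eqref{eq:deltaxu_dot} means that whenever $\delta x(t)=0_{nN}$ on an interval, also $\delta\dot x(t)=0_{nN}$ there. Substituting $\delta x=0$ into \eqref{eq:deltax_dot}, the term $\big[I_N\otimes \mathbf{J}_xf(\bar x,\bar\mu)-\widetilde L^P\otimes \mathbf{J}_xg(0)\big]\delta x$ vanishes. What remains is
\[
\delta\dot x=\big[I_N\otimes \mathbf{J}_\mu f(\bar x,\bar\mu)\big]\delta\mu+\delta u .
\]
This is zero if and only if $\delta u=-\big(I_N\otimes \mathbf{J}_\mu f(\bar x,\bar\mu)\big)\delta\mu$. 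That gives both directions of the equivalence at once. It is also the first-order version of the necessary condition in the Proposition, since $f(s,\mu_i)+u_i$ equal for all $i$ linearizes to exactly this identity. For sufficiency we need the identity to hold for all $t$ along the synchronous trajectory, not just at one instant. Step 2 addresses precisely this.

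\emph{Step 2: consistency with the integral dynamics.} From \eqref{eq:deltau_dot}, $\delta\dot u=-\big[\widetilde L^I\otimes \mathbf{J}_xg(0)\big]\delta x$, which vanishes identically whenever $\delta x\equiv 0$. So along any solution remaining on the synchronous subspace, $\delta u(t)\equiv\delta u(t_0)$ is constant. Differentiating the cancellation identity in time then forces $\frac{d}{dt}\big[(I_N\otimes \mathbf{J}_\mu f(\bar x(t),\bar\mu))\delta\mu\big]=0$, since $\delta\mu$ is constant. Hence $(I_N\otimes \mathbf{J}_\mu f(\bar x(t),\bar\mu))\delta\mu$ must be constant in time, which is the stated necessary condition.

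Conversely, if $\mathbf{J}_\mu f$ is constant, the mismatch term in \eqref{eq:deltax_dot} is a constant vector $d=(I_N\otimes \mathbf{J}_\mu f)\delta\mu$. The choice $\delta u(t_0)=-d$ with $\delta x(t_0)=0$ then makes $(\delta x,\delta u)\equiv(0,-d)$ an equilibrium of the affine system \eqref{eq:deltaxu_dot}. This proves invariance, and it shows the mismatch acts as an additive constant disturbance.

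\emph{Main obstacle.} The delicate point is not the algebra but the meaning of invariance in Step 2. It has to be read as invariance for all $t$ along the trajectory, so that the time derivative of the identity may legitimately be taken; this requires $\bar x(t)$ to be differentiable, which holds since it solves an ODE. I would also note explicitly that $\delta u$ is not free but is tied to the coupling through \eqref{eq:deltau_dot}, and that compatibility of $-d$ with $\sum_i\delta u_i=0$ follows from $\sum_i\delta\mu_i=0$.
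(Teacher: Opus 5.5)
Your proposal is correct and follows the paper's proof essentially step by step: you substitute $\delta x=0$ into \eqref{eq:deltaxu_dot} to obtain the cancellation identity as the invariance condition, then use $\delta\dot u=0$ on the subspace to force $(I_N\otimes \mathbf{J}_\mu f(\bar x(t),\bar\mu))\delta\mu$ to be constant, and take constant $\mathbf{J}_\mu f$ as the sufficient case (your explicit constant solution $(0,-d)$ and the zero-sum check go slightly beyond the paper). One small remark: differentiating the identity in Step 2 is unnecessary, and it would also require $\mathbf{J}_\mu f$ to be differentiable in $x$, not just $\bar x(t)$ to be differentiable; since $\delta u$ is constant and equals $-(I_N\otimes \mathbf{J}_\mu f(\bar x(t),\bar\mu))\delta\mu$ at every $t$, the constancy follows directly, as the paper argues.
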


\begin{pf}
Evaluating the first-order transverse dynamics \eqref{eq:deltaxu_dot} on the synchronous subspace \(\delta x=0\) gives
\[
\delta \dot{x}
=
\left(I_N\otimes J_{\mu}f(\bar{x},\bar{\mu})\right)\delta\mu
+
\delta u,
\]
and
\[
\delta \dot{u}=0 .
\]
Therefore, the subspace \(\delta x=0\) is invariant for the first-order transverse dynamics if and only if
\[
\delta \dot{x}=0
\]
whenever \(\delta x=0\). This is equivalent to requiring
\[
\delta u
=
-\left(I_N\otimes J_{\mu}f(\bar{x},\bar{\mu})\right)\delta\mu .
\]

Since \(\delta \dot{u}=0\) on \(\delta x=0\), the value of \(\delta u\) cannot vary while the trajectory remains on the synchronous subspace. Hence, the cancellation condition above can persist along the synchronous trajectory only if
\[
\left(I_N\otimes J_{\mu}f(\bar{x}(t),\bar{\mu})\right)\delta\mu
\]
is constant in time. A sufficient structural condition for this to hold is that \(J_{\mu}f(x,\mu)\) is constant. Under this condition, the first-order contribution of the parameter mismatch is equivalent to an additive constant disturbance, which can be compensated by the integral component of the dynamic diffusive coupling.
\end{pf}
\color{black}





Notice that, the condition $\mathbf{J}_\mu f(x,\mu)$ being constant implies that, to first order, the mismatch enters as an additive constant input. Hence, dynamic diffusive coupling can reject the mismatch when the parameter heterogeneity is indistinguishable from a constant nodal disturbance in the transverse dynamics.
Now that we have characterized the invariance of the synchronization manifold, we can then focus on its local transverse stability. Although a general result cannot be drawn for arbitrary hypergraph topologies of the two layers, local transverse stability can be proved when some assumptions on the two layers are made, and specifically

\begin{assumption}\label{ass:1}
Laplacian matrix $\widetilde L^I$ has a simple zero eigenvalue. Moreover, denoting the elements of the spectra of $\widetilde{L}^P$ and $\widetilde{L}^I$ as $\mathrm{spec}(\widetilde{L}^P) =  \{0,\lambda^P_2,\dots,\lambda^P_N\}$ and $\mathrm{spec}(\widetilde{L}^I) = \{0,\lambda^I_2,\dots, \lambda_N^I\}$, respectively, there exists a transformation $T$  such that $T^{-1}\widetilde{L}^PT = \Lambda^P$ and $T^{-1}\widetilde{L}^IT = \Lambda^I$ with
\begin{equation}\label{eq:lambdas}
\renewcommand{\arraystretch}{1.5}
    \Lambda^P = 
\begin{bmatrix}
    0      & 0               & \dots  & 0                 & 0           \\
    0      & \lambda^P_2     & 0      & \dots             & 0           \\
    0      & \Lambda^P_{32} & \lambda_3^P & \dots        & 0           \\
    \vdots & \vdots          & \ddots & \ddots            & \vdots      \\
    0      & \Lambda^P_{N2} & \dots  & \Lambda^P_{NN-1} & \lambda_N^P
\end{bmatrix},
    \quad 
    \Lambda^I = \begin{bmatrix}
    0      & 0               & \dots  & 0                 & 0           \\
    0      & \lambda^I_2     & 0      & \dots             & 0           \\
    0      & \Lambda^I_{32} & \lambda_3^I & \dots        & 0           \\
    \vdots & \vdots          & \ddots & \ddots            & \vdots      \\
    0      & \Lambda^I_{N2} & \dots  & \Lambda^I_{NN-1} & \lambda_N^I
\end{bmatrix}.
\end{equation}
\end{assumption}
We are now ready to state the following result:
    
\begin{theorem}  \label{Theorem 2}
If Assumption \ref{ass:1} holds,
$\|\mathbf{J}_x g(0)\|$ is bounded,  $\mathbf{J}_\mu f(\bar x, \bar \mu)$ is constant, $\delta u(0)= [-\mathbf{I}_N \otimes\mathbf{J}_\mu f(\bar x, \bar \mu
)]\delta \mu$,
and the maximum Lyapunov exponent (MLE) of
\begin{equation}
\mathcal{M}_i = 
\begin{bmatrix}
\mathbf{J}_x f(\bar{x},\bar{\mu}) - \lambda^P_i \mathbf{J}_x g(0) & -\mathbf{I}_n \\
\lambda^I_i \mathbf{J}_x g(0) & \mathbf{0}_n
\end{bmatrix}.
\end{equation}
is negative for all $\lambda^P_i$ and $\lambda^I_i$ where $i=2, \dots, N$, then the synchronization manifold  under dynamic diffusive coupling is locally transversely stable.

\end{theorem}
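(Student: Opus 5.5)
The proof follows the classical MSF argument of Pecora and Carroll, adapted to the two-layer linearized system \eqref{eq:deltaxu_dot}. First the constant mismatch term is removed. Introduce the shifted integral variable
\[
\delta w = \delta u + \left(I_N\otimes \mathbf{J}_\mu f(\bar x,\bar\mu)\right)\delta\mu .
\]
Since $\mathbf{J}_\mu f$ is constant and $\delta\mu$ does not depend on time, $\delta\dot w=\delta\dot u$. Equation \eqref{eq:deltax_dot} then reads $\delta\dot x = [I_N\otimes \mathbf{J}_xf-\widetilde L^P\otimes \mathbf{J}_xg(0)]\delta x+\delta w$. The hypothesis on $\delta u(0)$ gives $\delta w(0)=0$, and by the first-order mismatch compensation theorem the point $(\delta x,\delta w)=(0,0)$ is an invariant equilibrium of the resulting homogeneous linear time-varying system. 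Transverse stability therefore reduces to asymptotic stability of the origin of this system in the variable $z=[\delta x;\delta w]$.

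Next I block-triangularize using Assumption~\ref{ass:1}. Set $\eta=(T^{-1}\otimes I_n)\delta x$ and $\xi=(T^{-1}\otimes I_n)\delta w$. Mixed-product rules for the Kronecker product give
\[
\dot\eta=[I_N\otimes \mathbf{J}_xf-\Lambda^P\otimes \mathbf{J}_xg(0)]\eta+\xi,\qquad \dot\xi=-[\Lambda^I\otimes \mathbf{J}_xg(0)]\eta .
\]
The first block $(\eta_1,\xi_1)$ corresponds to the zero eigenvalue, whose eigenvector is the consensus direction $1_N$ in the simple-zero case. It describes motion along the synchronization manifold and is excluded from the transverse analysis. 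Because $\sum_k\delta x_k=0$, this component is also pinned by the projection built into $\widetilde L$.

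For the transverse blocks, group $(\eta_i,\xi_i)$ for $i=2,\dots,N$. Their dynamics are
\[
\begin{bmatrix}\dot\eta_i\\ \dot\xi_i\end{bmatrix}=\begin{bmatrix}\mathbf{J}_xf-\lambda_i^P\mathbf{J}_xg(0) & I_n\\ -\lambda_i^I\mathbf{J}_xg(0) & \mathbf{0}_n\end{bmatrix}\begin{bmatrix}\eta_i\\ \xi_i\end{bmatrix}+\sum_{j<i}\begin{bmatrix}-\Lambda^P_{ij}\mathbf{J}_xg(0)\eta_j\\ -\Lambda^I_{ij}\mathbf{J}_xg(0)\eta_j\end{bmatrix}.
\]
This matrix is related to $\mathcal M_i$ by the change of sign $\xi_i\mapsto-\xi_i$, a similarity that leaves Lyapunov exponents unchanged.

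The argument then proceeds by induction on $i$. For $i=2$ there is no forcing, so negativity of the MLE of $\mathcal M_2$ gives exponential decay of $(\eta_2,\xi_2)$. For $i>2$, the block is the MLE-stable system driven by the terms $\eta_j$, $j<i$. These inputs decay exponentially by the inductive hypothesis and have bounded gain because $\|\mathbf{J}_xg(0)\|$ is bounded. A variation-of-constants estimate then yields exponential decay of $(\eta_i,\xi_i)$. Hence $z\to0$, and the manifold is locally transversely stable.

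The main obstacle is this cascade step. A negative MLE for a time-varying linear system does not by itself give a uniform exponential bound on the transition matrix $\Phi_i(t,s)$ that is uniform in $s$, and such a bound is needed to control the convolution with the inputs. I would handle it by invoking Lyapunov regularity of the variational system along the synchronous trajectory, as implicitly done in standard MSF arguments. Regularity gives $\|\Phi_i(t,s)\|\le C_\epsilon e^{(\Lambda_i+\epsilon)(t-s)+\epsilon s}$. With a small enough $\epsilon$, the convolution with inputs that decay exponentially still decays exponentially.

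A secondary point to check is that the block-diagonal splitting into $\eta_1$ versus the transverse components is consistent with the triangular structure assumed in \eqref{eq:lambdas}. The first row and first column of $\Lambda^P$ and $\Lambda^I$ vanish, so the transverse blocks are not driven by $\eta_1$. This confirms that the splitting is consistent.
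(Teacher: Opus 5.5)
Your proposal is correct and follows the same route as the paper. The paper likewise applies $\eta=(T^{-1}\otimes I_n)\delta x$ and $\gamma=-(T^{-1}\otimes I_n)\delta u$ and then inducts down the lower-triangular cascade; your constant shift $\delta w=\delta u+(I_N\otimes\mathbf{J}_\mu f)\delta\mu$ plays exactly the role of the paper's per-block equilibria $\gamma_i^*=\mathbf{J}_\mu f(\bar x,\bar\mu)\,w_i$, since $\xi_i=-(\gamma_i-\gamma_i^*)$. Your remark that a negative MLE alone does not control the convolution in the cascade, together with the appeal to Lyapunov regularity for a nonuniform exponential bound, makes explicit a step the paper simply asserts from the boundedness of $\|\mathbf{J}_x g(0)\|$.
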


\begin{pf}
By applying the change of variables $\eta=(T^{-1} \otimes I_n)\delta x$ and $\gamma = -(T^{-1} \otimes I_n)\delta u$, we rewrite \eqref{eq:deltaxu_dot} as
\begin{subequations} \label{error dynamics 6}
\begin{align}
    \dot \eta &= \bigg[I_N \otimes\mathbf{J}_xf(\bar{x},\bar{\mu})-\Lambda^P\otimes \mathbf{J}_xg(0)\bigg]\eta+\bigg[T^{-1}\otimes\mathbf{J}_\mu  f(\bar{x},\bar{\mu})\bigg]\delta \mu - \gamma,  \\
    \dot \gamma &=  \bigg[\Lambda^I\otimes\mathbf{J}_xg(0)\bigg]\eta,
\end{align} 
\end{subequations}
From \eqref{eq:lambdas}, \eqref{error dynamics 6} can be written as
\begin{subequations}
\begin{align}
    \dot\eta_{1} &= \mathbf{J}_xf(\bar{x},\bar{\mu})\eta_{1}, \label{MSf syn} \\ \nonumber
    \dot \gamma_{1} &= 0_n,\\
    \dot\eta_{2} &= \bigg[\mathbf{J}_xf(\bar{x},\bar{\mu})-\mathbf{J}_xg(0)\lambda^P_2 \bigg]\eta_{2} -\gamma_{2}+ \mathbf{J}_\mu f(\bar{x},\bar{\mu}) w_{2}, \label{MSf 1} \\ \nonumber
    \dot \gamma_{2} &= \mathbf{J}_xg(0)\lambda^I_2\eta_{2}\\
    \dot\eta_{3} &= \bigg[\mathbf{J}_xf(\bar{x},\bar{\mu})- \mathbf{J}_xg(0)\lambda^P_3\bigg]\eta_{3} - \mathbf{J}_xg(0)\Lambda_{32}^P\eta_{2} - \gamma_{3} + \mathbf{J}_\mu f(\bar{x},\bar{\mu})w_{3}, \label{MSf 2} \\  \nonumber
    \dot \gamma_{3} &= \mathbf{J}_xg(0)\lambda^I_3\eta_{3}+\mathbf{J}_xg(0)\Lambda_{32}^I\eta_{2}, \nonumber\\
    &\vdots \nonumber\\
    \dot\eta_{N} &= \mathbf{J}_xf(\bar{x},\bar{\mu})\eta_{N} - \mathbf{J}_xg(0)\bigg[ \Lambda_{N2}^P\eta_{2} + \Lambda_{N3}^P\eta_{3} + \dots + \Lambda_{NN-1}^P\eta_{N-1} \bigg] -\gamma_N + \mathbf{J}_\mu f(\bar{x},\bar{\mu})w_{N}, \label{MSf 3} \\ \nonumber
    \dot \gamma_{N} &= \mathbf{J}_xg(0)\bigg[\Lambda_{N2}^I\eta_{2} + \Lambda_{N3}^I\eta_{3} + \dots + \Lambda_{NN-1}^I\eta_{N-1} \bigg],
\end{align}
\end{subequations}
where
\[
w_{i}= 
    \sum_{k=1}^N T^{-1}_{ik}\delta \mu_k  \qquad i=1,\ldots,N.
\]
Equation \eqref{MSf syn} represents the linearized dynamics tangential to the synchronization manifold and, consequently, does not influence the manifold's local stability. Instead, the stability of the synchronous state is determined by the evolution of transverse perturbations. Specifically, the synchronization manifold is locally asymptotically stable if the $\eta_i$ converge to the zero for all $i = 2, \dots, N$. Given that $\mathbf{J}_\mu f(\bar{x}, \bar{\mu})$ is constant and the spectrum of the integral layer Laplacian, $\mathrm{spec}(\widetilde{L}^I)$, possesses a simple zero eigenvalue from Assumption \ref{ass:1}, the equilibrium points of the variational system \eqref{MSf 1} are:

\begin{equation*}
\begin{bmatrix}
\eta_{2}^* \\
\gamma_{2}^*
\end{bmatrix}
=
\begin{bmatrix}
0_n \\
\mathbf{J}_\mu f(\bar{x},\bar{\mu}) w_{2}
\end{bmatrix},
\end{equation*}
with its stability determined by the maximum Lyapunov exponent (MLE) of the dynamic matrix
\begin{equation}
\mathcal{M}_2 = 
\begin{bmatrix}
\mathbf{J}_x f(\bar{x},\bar{\mu}) - \lambda_2^P \mathbf{J}_x g(0) & -\mathbf{I}_n \\
\lambda_2^I \mathbf{J}_x g(0) & \mathbf{0}_n
\end{bmatrix}.
\end{equation}
Since the MLE for \eqref{MSf 1} is negative, then the transverse perturbations $\eta_{2}$ decay exponentially to zero and $\gamma_2$ to $\mathbf{J}_\mu f(\bar x,\bar \mu)w_2$ as $t \to \infty$. Given that $\|\mathbf{J}_x g(0)\|$ is bounded and  \eqref{MSf 1} has a negative MLE, the stability of the subsequent layer \eqref{MSf 2} is determined by the MLE of the dynamics matrix $ \mathcal{M}_3$, which is negative by hypothesis, 
yielding $\eta_{3} \to 0_n$ and $\gamma_{3} \to \mathbf{J}_\mu f(\bar{x},\bar{\mu}) w_{3}$ in the limit $t \to \infty$. By induction, this argument establishes that the transverse perturbations $\eta_i$, $i=2,\ldots,N$ converge to $0_n$ (and $\gamma= [T^{-1}\otimes\mathbf{J}_\mu f(\bar x, \bar \mu)]\delta \mu$ since $\delta u(0)= [-\mathbf{I}_N \otimes\mathbf{J}_\mu f(\bar x, \bar \mu
)]\delta \mu$) since the MLE of the dynamic matrix $\mathcal{M}_i$ is negative for all $\lambda_i$ with $i=2,\dots,N$. This completes the proof.

\end{pf}

A particular instance of transformation $T$ that fulfills Assumption \ref{ass:1} is, when it exists, the transformation that simultaneously Jordanize matrices $\widetilde{L}^P$ and $\widetilde{L}^I$. Accordingly, we can give the following corollary:

\begin{corollary}
\label{cor2.1}
If $\widetilde{L}^I$ has a simple zero eigenvalue, and $\widetilde{L}^P$ and $\widetilde{L}^I$ are simultaneously Jordanizable, $\|\mathbf{J}_x g(0)\|$ is bounded,  $\mathbf{J}_\mu f(\bar x, \bar \mu)$ is constant, $\delta u(0)= [-\mathbf{I}_N \otimes\mathbf{J}_\mu f(\bar x, \bar \mu
)]\delta \mu$,
and the maximum Lyapunov exponent (MLE) of
\begin{equation}
\mathcal{M}_i = 
\begin{bmatrix}
\mathbf{J}_x f(\bar{x},\bar{\mu}) - \lambda^P_i \mathbf{J}_x g(0) & -\mathbf{I}_n \\
\lambda^I_i \mathbf{J}_x g(0) & \mathbf{0}_n
\end{bmatrix}.
\end{equation}
is negative for all $\lambda^P_i$ and $\lambda^I_i$ where $i=2, \dots, N$, then the synchronization manifold under dynamic diffusive coupling is locally transversely stable.
\end{corollary}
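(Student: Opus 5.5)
The plan is to reduce Corollary~\ref{cor2.1} to Theorem~\ref{Theorem 2}. Every hypothesis of the corollary except simultaneous Jordanizability appears verbatim in the theorem, so the whole argument is to show that a common Jordanizing transformation satisfies Assumption~\ref{ass:1}. Once that is done, the MLE condition on each $\mathcal{M}_i$, the boundedness of $\|\mathbf{J}_x g(0)\|$, the constancy of $\mathbf{J}_\mu f(\bar x,\bar\mu)$ and the initial condition on $\delta u(0)$ can be passed unchanged to Theorem~\ref{Theorem 2}. That theorem then gives local transverse stability.

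The first step is to note that $\widetilde L^P$ and $\widetilde L^I$ both have zero row sums. This holds because $L^P1_N=0$, and the correction term $1_N[\frac1N\sum_k L^P_{k1},\dots]$ applied to $1_N$ gives $\frac1N 1_N 1_N^{\mathrm T}L^P1_N=0$; the same argument applies to $L^I$. Hence $1_N$ is a common eigenvector for the eigenvalue $0$.

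Next, let $T$ be the transformation that simultaneously Jordanizes the two matrices. Here I would use the lower-triangular Jordan convention, i.e.\ superdiagonal entries moved to the subdiagonal, which is obtained by reversing the order of the basis within each block. This makes $T^{-1}\widetilde L^P T$ and $T^{-1}\widetilde L^I T$ lower triangular, with the eigenvalues on the diagonal.

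It remains to place the zero eigenvalue in the first position, with a first row and first column that vanish off the diagonal. Because $\widetilde L^I$ has a simple zero eigenvalue, its zero Jordan block is $1\times1$ and is spanned by $1_N$. I would order the blocks so that this block comes first and take the first column of $T$ to be $1_N$. For $\widetilde L^P$, the first column of $T^{-1}\widetilde L^P T$ equals $T^{-1}\widetilde L^P 1_N = 0$, so the first column vanishes. The first row is zero because a Jordan form is block diagonal and the first block is $1\times 1$.

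The main obstacle is justifying that the common basis really puts $1_N$ in a $1\times1$ block for $\widetilde L^P$ as well. If zero has higher algebraic multiplicity for $\widetilde L^P$, the vector $1_N$ might sit at the top of a longer chain. I would handle this through the meaning of simultaneous Jordanization: a single $T$ exists for which both conjugates are in Jordan form. Since $Te_1=1_N$ is an eigenvector heading a $1\times1$ block for $\widetilde L^I$, the block partition induced by $T$ forces the first row of $T^{-1}\widetilde L^PT$ to be zero apart from a possible coupling to later blocks, which the Jordan structure excludes. If needed, I would also remark that $\lambda_1^P=0$ is consistent with the row-sum argument.

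With $\Lambda^P$ and $\Lambda^I$ in the form \eqref{eq:lambdas}, Assumption~\ref{ass:1} holds, and Theorem~\ref{Theorem 2} yields the claim.
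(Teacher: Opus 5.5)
Your reduction to Theorem~\ref{Theorem 2} is the same route as the paper, whose proof just asserts that a simultaneously Jordanizing $T$ fulfils Assumption~\ref{ass:1}. The problem is the step you flag yourself as the main obstacle: your argument does not close it.

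Reordering blocks ``so that this block comes first'' permutes the common basis, so it also acts on $T^{-1}\widetilde L^PT$. It keeps that matrix lower bidiagonal only if the position of $1_N$ is a $1\times1$ block for $\widetilde L^P$ as well. Your justification, that the Jordan structure excludes a coupling to later blocks, is false in general. Take $A=\begin{bmatrix}0&0\\1&0\end{bmatrix}$ and $B=\mathrm{diag}(1,0)$. Both are in Jordan form with ones on the subdiagonal (your convention) for $T=I_2$. $B$ has a simple zero eigenvalue with eigenvector $e_2$, and $Ae_2=0$. Yet $e_2$ ends a chain of length two for $A$. Moving it to the front gives $\begin{bmatrix}0&1\\0&0\end{bmatrix}$: the first column vanishes, as your computation predicts, but the first row does not. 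The right-kernel fact $\widetilde L^P1_N=0$, which is all your row-sum step gives, cannot exclude this.

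The missing ingredient is the left kernel. Since $\widetilde L^P=(I_N-\frac1N1_N1_N^{\mathrm T})L^P$, you also have $1_N^{\mathrm T}\widetilde L^P=0$, and likewise for $\widetilde L^I$. The argument then runs as follows.
\begin{itemize}
\item Let $k$ be the position of the $1\times1$ zero block of $T^{-1}\widetilde L^IT$, and let $e_k$ be the $k$th canonical vector. Row $k$ and column $k$ of that matrix both vanish.
\item Hence $Te_k$ spans $\ker\widetilde L^I=\mathrm{span}(1_N)$, and $e_k^{\mathrm T}T^{-1}$ spans the left kernel of $\widetilde L^I$, which is $\mathrm{span}(1_N^{\mathrm T})$. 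Both kernels are one-dimensional because the zero eigenvalue is simple.
\item Column $k$ of $T^{-1}\widetilde L^PT$ is then a multiple of $T^{-1}\widetilde L^P1_N=0$. Row $k$ is a multiple of $1_N^{\mathrm T}\widetilde L^PT=0$.
\item So position $k$ is a $1\times1$ zero block of $\widetilde L^P$ too. Only now is moving it to the front a legitimate common permutation that produces \eqref{eq:lambdas}.
\end{itemize}
This also makes the first row of $T^{-1}$ proportional to $1_N^{\mathrm T}$. Hence $w_1=0$, which the proof of Theorem~\ref{Theorem 2} uses tacitly when it writes $\dot\eta_1$ without a mismatch term.

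A smaller point: ``reversing the basis within each block'' is ambiguous when the two Jordan forms have different block partitions. Reverse the whole basis instead, i.e.\ multiply $T$ on the right by the exchange matrix. This turns both upper-bidiagonal forms into lower-bidiagonal ones at once.
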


\begin{pf}
The proof trivially follows noting that Assumption \ref{ass:1} is fulfilled and applying Theorem \ref{Theorem 2}.    
\end{pf}

Notice that, under the assumptions of Theorem \ref{Theorem 2} and Corollary \ref{cor2.1}, the transverse stability of the synchronization manifold depends on the MLE associated to the blocks $\mathcal M_2,\ldots,\mathcal M_N$. Therefore, the following master stability equation can be used to check transversal stability:
\begin{align} \label{MSF}
\begin{bmatrix}
    \dot{\eta} \\
    \dot{\gamma}
\end{bmatrix} = 
\begin{bmatrix}
    \mathbf{J}_x f(\bar{x}, \bar{\mu}) - \varphi \mathbf{J}_x g(0) & -\mathbf{I}_n \\
    \varrho\mathbf{J}_x g(0) & \mathbf{0}_n
\end{bmatrix}
\begin{bmatrix}
    \eta \\
    \gamma
\end{bmatrix},
\end{align}
where $\varphi, \varrho \in \mathbb{C}$ are complex parameters. The Master Stability Function (MSF) for the network is defined as the maximum Lyapunov exponent (MLE) associated with \eqref{MSF} as a function of $\varphi$ and $\varrho$, i.e., $\mathrm{MSF(\varphi,\varrho)}$. To check transversal stability of a given network topology, one would then need to check that
\[
\max_{i=2,\ldots,N}\mathrm{MSF}(\lambda_i^P,\lambda_i^I)<0,
\]
which is equivalent to require the negativity of the MLE for $\mathcal M_2,\ldots,\mathcal M_N$, as reported in the statement of Theorem \ref{Theorem 2} and Corollary \ref{cor2.1}.

In practice, one first computes the $\mathrm{MSF}$ over the relevant $(\varphi,\varrho)$ domain and then checks whether all pairs $(\lambda_i^P,
\lambda_i^I)$ lie inside the region where the MSF is negative.
Additionally, note that since the eigenvalues of the Laplacian matrices $\widetilde L^P$ and $\widetilde L^I$ may also be negative, the MSF should also be evaluated for negative values of $\Re(\varphi)$ and $\Re(\varrho)$ \cite{della_rossa_emergence_2023}.


\begin{rmk} \label{Remark 2}
All the analysis performed in this section relies on the presence of the integral action, therefore the MSF is not meaningful for $\varrho=0$ (that corresponds to no integral action). In the absence of an integral action, the synchronization manifold is no longer invariant, and a bound on the synchronization error can be found, see e.g. \cite{rizzello_pinning_2024} for directed hypergraphs, or \cite{nishikawa_synchronization_2006} for digraphs. 
\end{rmk}

When Assumption \ref{ass:1} does not hold, there is still a relevant instance in which the MSF can be used to evaluate the transverse stability of the synchronization manifold. This is the case when the integral layer is a complete $k$-uniform hypergraph, with each hyperedge sharing the same coupling strength $\sigma_I$, and the other layer being any directed hypergraph. In this scenario, it is possible to state the following theorem:

\begin{theorem} \label{Theorem 3}
If the integral layer is a complete $k$-uniform hypergraph with $(\sigma_{e})_I=\sigma_I$ for all $e\in\mathcal E_I$,  $\|\mathbf{J}_x g(0)\|$ is bounded, $ \mathbf{J}_\mu f(\bar x, \bar \mu)$ is constant, $\delta u(0) = [-I_N\otimes \mathbf{J}_\mu f(\bar x, \bar \mu)]\delta\mu$,
$\max_{i=2,\ldots,N}\mathrm{MSF}(\lambda_i^P,\sigma_I N)<0$, then the synchronization manifold under dynamic diffusive coupling is locally transversely stable.

\end{theorem}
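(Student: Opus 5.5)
The plan is to show that, when the integral layer is a complete $k$-uniform hypergraph with uniform weight $\sigma_I$, the matrix $\widetilde L^I$ acts as $\sigma_I N$ times the identity on the transverse subspace. Then any transformation $T$ that triangularizes $\widetilde L^P$ automatically brings $\widetilde L^I$ into the form required by Assumption~\ref{ass:1}. Once this is in place, the conclusion follows by repeating the argument of Theorem~\ref{Theorem 2} with $\lambda_i^I=\sigma_I N$ for all $i=2,\ldots,N$.

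\textbf{Step 1: compute $L^I$ and $\widetilde L^I$.} For a complete $k$-uniform directed hypergraph, every pair of ordered disjoint node subsets compatible with the uniformity constraint appears as a hyperedge, so the hypergraph is invariant under all node permutations. Hence $A^I$, computed from \eqref{signed matrix}, is permutation invariant: all off-diagonal entries are equal, to some value $c$ depending on $k$, $N$, the weights $\alpha_e,\beta_e$ and $\sigma_I$. It follows that
\[
L^I = c\,(N I_N - 1_N 1_N^{\mathrm T}).
\]
Its column sums vanish, so $\widetilde L^I=L^I$. The eigenvalues are $0$ on $1_N$ and $cN$ on $1_N^\perp$. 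The main computational obstacle is to show that $c=\sigma_I$ under the paper's normalisation: that is, that positive tail contributions and negative head contributions over all hyperedges with $i$ as head and $j$ as tail or co-head sum to exactly $\sigma_I$. I would first check this against the symmetric-weight hyperedges used in \cite{della_rossa_emergence_2023}. If a $k$-dependent factor survives, I would absorb it into the definition of $\sigma_I$.

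\textbf{Step 2: build $T$.} Since $\widetilde L^P$ has zero row sums, $1_N$ is a right eigenvector for eigenvalue $0$. Its columns also sum to zero by construction, so $1_N^\perp$ is invariant under $\widetilde L^P$. Choose $T=[1_N/\sqrt N,\;Q]$, where the columns of $Q$ span $1_N^\perp$ and are picked by a Schur decomposition of $\widetilde L^P$ restricted to $1_N^\perp$, so that $T^{-1}\widetilde L^P T=\Lambda^P$ is lower triangular with first row and column zero, as in \eqref{eq:lambdas}. Because $\widetilde L^I$ equals $\sigma_I N$ times the identity on $1_N^\perp$ and vanishes on $1_N$, we get
\[
T^{-1}\widetilde L^I T=\mathrm{diag}(0,\sigma_I N,\ldots,\sigma_I N).
\]
This has the form of $\Lambda^I$ in \eqref{eq:lambdas} with all $\Lambda^I_{ij}=0$, and the zero eigenvalue of $\widetilde L^I$ is simple. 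Thus Assumption~\ref{ass:1} holds even though $\widetilde L^P$ need not be diagonalizable.

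\textbf{Step 3: conclude.} All hypotheses of Theorem~\ref{Theorem 2} are now satisfied, with $\lambda_i^I=\sigma_I N$. The hypothesis $\max_{i=2,\ldots,N}\mathrm{MSF}(\lambda_i^P,\sigma_I N)<0$ is exactly the negativity of the MLE of each $\mathcal M_i$. The cascade argument of Theorem~\ref{Theorem 2} then applies, and it is in fact simpler here because the $\gamma_i$ equations carry no cross terms. Together with the choice of $\delta u(0)$, it gives $\eta_i\to0_n$ for $i=2,\ldots,N$, which is local transverse stability.
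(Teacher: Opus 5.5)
Your argument is correct in structure and uses the paper's mechanism: on the transverse subspace $\widetilde L^I$ acts as a scalar, so any change of coordinates that triangularizes the proportional block leaves the integral block diagonal, and the cascade of Theorem~\ref{Theorem 2} then goes through. The packaging is different, though.

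The paper proceeds in three steps:
\begin{enumerate}
\item It takes a $V$ diagonalizing $\widetilde L^I$, with $1_N$ as last column.
\item It Jordanizes the $(N-1)\times(N-1)$ block $\Gamma$ of $V^{-1}\widetilde L^P V$ with a second matrix $Q$. It keeps a row $\Lambda^P_{N1},\dots,\Lambda^P_{N,N-1}$ that feeds the transverse coordinates into the synchronous one.
\item It concludes ``similarly to Theorem~\ref{Theorem 2}''.
\end{enumerate}
It explicitly introduces this case as going beyond Assumption~\ref{ass:1}.

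You instead use that $\widetilde L^P$ has zero column sums as well as zero row sums, so the synchronous direction decouples exactly. A unitary $T=[1_N/\sqrt N,\,Q]$ from a Schur decomposition then verifies Assumption~\ref{ass:1} outright, and Theorem~\ref{Theorem 2} can be invoked as a black box. This buys several things:
\begin{itemize}
\item The proof is cleaner and handles non-diagonalizable $\widetilde L^P$ with no extra work.
\item It shows Theorem~\ref{Theorem 3} is actually a special case of Theorem~\ref{Theorem 2}. Equivalently, the paper's entries $\Lambda^P_{Nj}$ vanish: the last row of $V^{-1}$ is the left null vector $1_N^{\mathrm T}/N$ of $\widetilde L^I$, and $1_N^{\mathrm T}\widetilde L^P=0$.
\item You justify, via permutation symmetry, the scalar form of $\widetilde L^I$ that the paper simply posits.
\end{itemize}
The paper's route needs only some diagonalizing $V$ and a Jordanization, with no orthogonality.

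The step you leave open, $c=\sigma_I$, cannot be closed in general. This is a defect of the statement, which the paper's proof shares by asserting the block $\sigma_I N I_{N-1}$ without computing it.
\begin{itemize}
\item For $k=2$ (the complete graph used in the simulations) it holds: $A^I_{ij}=\sigma_I$, $L^I=\sigma_I(NI_N-1_N1_N^{\mathrm T})$, and the transverse eigenvalue is $\sigma_I N$.
\item For $k\ge 3$, natural readings of ``complete $k$-uniform'' produce an $N$-dependent combinatorial factor. For example, take all hyperedges with two tails and one head, uniform weights, and $(\sigma_e)_I=\sigma_I$. Each ordered pair $(i,j)$ then lies in $N-2$ hyperedges with weight $1/2$. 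Hence $c=\sigma_I(N-2)/2$ and $\varrho=\sigma_I N(N-2)/2$.
\end{itemize}
Absorbing this factor into $\sigma_I$ restates the theorem rather than proving it as written. What your argument, and the paper's, actually establishes is the result with $\varrho=cN$.

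Likewise, your permutation-invariance step needs the tail and head weights $\alpha_e,\beta_e$ to be uniform, or at least permutation-equivariant. Otherwise $A^I$ need not have constant off-diagonal entries. The paper also makes this assumption only implicitly.
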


\begin{pf}

Let $V$  be  an invertible transformation matrix which diagonalizes $\widetilde{L}^I$  such that $\widetilde{L}^P = V\Lambda^PV^{-1}$ and $\widetilde{L}^I =  V\Lambda^IV^{-1}$ with

\begin{equation}
    \Lambda^P =
    \left[ \begin{array}{ccc|c}
         & & &  \\
         & \Gamma & & 0_{N-1}\\
         & & & \\ \hline 
        \rule{0pt}{15pt} \Lambda^P_{N1} & \dots & \Lambda^P_{NN-1} & 0
    \end{array}
    \right],
\quad
    \Lambda^I =
    \left[ \begin{array}{ccc|c}
         & & &  \\
         & \sigma_I N I_{N-1} & & 0_{N-1}\\
         & & & \\ \hline 
        \rule{0pt}{15pt}  & 0^T_{N-1} &  & 0
    \end{array}
    \right],
\end{equation}
%
%
where $\Gamma\in\mathbb{R}^{(N-1)\times (N-1)}$, and the last column of both $\Lambda^P$ and $\Lambda^I$ are equal to $0_N$ as the Laplacian matrices are zero row-sum and have $1_N$ as one of their eigenvectors. Using the transformation $\varsigma = (V^{-1} \otimes I_n)\delta x$, we could write \eqref{eq:deltax_dot} and \eqref{eq:deltau_dot} as follows

\begin{subequations} 
\begin{align}
    \dot \varsigma' &= \bigg[I_{N-1} \otimes\mathbf{J}_xf(\bar{x},\bar{\mu})- \Gamma\otimes \mathbf{J}_xg(0)\bigg]\varsigma'+\bigg[\bar V\otimes\mathbf{J}_\mu  f(\bar{x},\bar{\mu})\bigg]\delta \mu - \varepsilon'  \label{eq1}\\
    \dot \varepsilon' &=  \sigma_I N\bigg[I_{N-1}\otimes\mathbf{J}_xg(0)\bigg]\varsigma' \label{eq2}\\
    \dot \varsigma_N &= \mathbf{J}_xf(\bar{x},\bar{\mu})\varsigma_N - \mathbf{J}_xg(0)\sum_{j=1}^{N-1}\Lambda_{Nj}\varsigma_j \\
    \dot \varepsilon_N &=  0_n
\end{align} 
\end{subequations}
where $\varsigma' = [\varsigma_1;\varsigma_2;\dots ;\varsigma_{N-1}]$ is the perturbations transverse to the synchronization manifold and $\varepsilon' = [\varepsilon_1;\varepsilon_2;\dots ;\varepsilon_{N-1}]$ with $\varepsilon = -(V^{-1} \otimes I_n)\delta u$. 

Next, let us denote $\bar{V}$ as the matrix obtained by extracting the first $N-1$ rows of $V^{-1}$. Moreover,
let $Q \in \mathbb{R}^{N-1 \times N-1}$  be an invertible transformation matrix which puts $\Gamma$ in its Jordan canonical form $\Lambda^\Gamma=Q^{-1}\Gamma Q$. Hence using the transformation  $\eta' = (Q^{-1} \otimes I_n)\varsigma'$, we could write \eqref{eq1} and \eqref{eq2} as follows:
\begin{subequations}   \label{error dyamics thr3}
\begin{align}
    \dot \eta' &= \bigg[I_{N-1} \otimes\mathbf{J}_xf(\bar{x},\bar{\mu})- \Lambda^\Gamma\otimes \mathbf{J}_xg(0)\bigg]\eta'+\bigg[ Q^{-1}\bar V\otimes\mathbf{J}_\mu  f(\bar{x},\bar{\mu})\bigg]\delta \mu - \gamma'  \\
    \dot \gamma' &=  \sigma_I N\bigg[I_{N-1}\otimes\mathbf{J}_xg(0)\bigg]\eta' 
\end{align} 
\end{subequations}
where $\gamma' = (Q^{-1} \otimes I_n)\varepsilon'$. Similar to the case discussed in Theorem \ref{Theorem 2}, the transverse stability of the synchronization manifold then follows from the hypothesis that $\max_{i=2,\ldots,N}\mathrm{MSF}(\lambda_i^P,\sigma_I N)<0$.



\end{pf}

\section{Numerical Applications} \label{sec:Numerical Applications}
In this section, we 1) use the $\mathrm{MSF}$ approach to make some salient observations on synchronizability and validate the theoretical findings of Section \ref{Results}, using as a testbed a network of coupled Lorenz oscillators, and 2) we apply the results on a control problem in opinion dynamics.
\subsection{Synchronization of Lorenz systems coupled on directed hypergraphs in the presence of heterogeneous constant disturbances}
We investigate the problem of achieving complete synchronization in a network of $N$ Lorenz systems subject to constant heterogeneous disturbances $\mu_i$. The vector field that describes the individual dynamics of the $i$th Lorenz system is
\begin{equation} \label{eq: lorenz system}
    f(x_i,\mu_i) = \begin{bmatrix}
        \varsigma(x_{i2} - x_{i1}) +\mu_i\\
        x_{i1}(\rho - x_{i3}) - x_{i2} \\
        x_{i1}x_{i2} - \vartheta x_{i3}
    \end{bmatrix},
\end{equation}
where $\varsigma=10$, $\rho=28$, $\vartheta=8/3$, so that, in the absence of coupling and for $\mu_i$ in [0,1], the dynamics would have a chaotic attractor \cite{lorenz_deterministic_1963}; the disturbance $\mu_i$ is randomly extracted from a uniform distribution in $[0,1]$. Denoting $x_a \in \mathbb{R}^3$ as a point belonging to the attractor of $f(x,\mu)$ with $f$ defined by $\eqref{eq: lorenz system}$ and $\mu = \frac{1}{N}\sum_{j=1}^N \mu_i$, we set the initial conditions for the simulations as $x_i(0) = x_a + p_i1_3$ where $p_i$ is randomly extracted from a uniform distribution in $[0,10^{-6}]$. In all simulations, we consider the coupling gain of the proportional layer to be identical, that is, $(\sigma_e)_P=\sigma_P$ for all $e\in\mathcal E_P$. Similarly, for the integral layer, we consider $(\sigma_e)_I=\sigma_I$ for all $e\in\mathcal E_I$. As for the coupling function $g$, in our simulations we made two alternative choices, i) $g(z)=[z_{1};0;0]$, and ii) $g(z)=[\tanh(z_{1});0;0]$, which share the same Jacobian, so that the network would have the same MSF.

\begin{figure}
    \centering
    \includegraphics[width=0.6\linewidth]{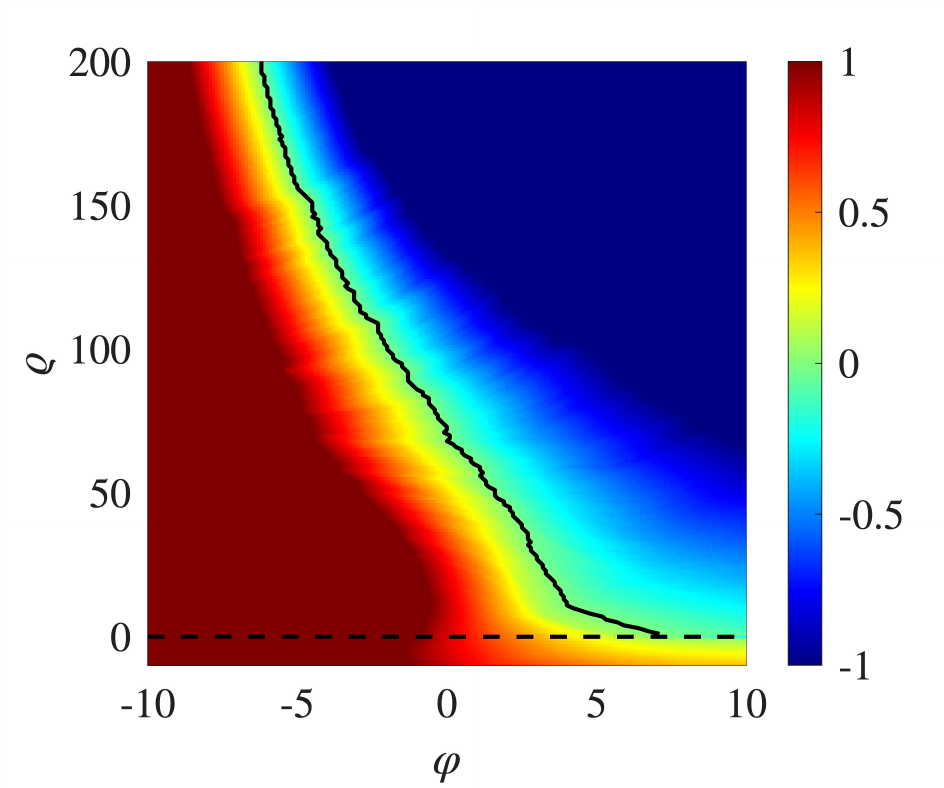}
    \caption{Master Stability Function for a network of chaotic Lorenz systems with $\mathbf{J}_xg(0)=\mathrm{diag}([1,0,0]^\mathrm{T})$, whose two layers have a real spectrum, which implies $\varphi,\varrho\in\mathbb R$. The solid black curve denotes the zero-contour of the MSF, delimiting the regions with local transverse stability. The dashed black line at $\varrho=0$ reminds that the analysis does not apply in the absence of an integral action, whereby the synchronization manifold would not be invariant.}
    \label{Fig.2}
\end{figure}

Following \cite{burbano_lombana_synchronization_2016}, to quantify the degree of synchronization in the network, we introduce the network disagreement $d$, defined as
\begin{equation}
    d(t)=\|[\delta x_1;\ldots;\delta x_N]\|
\end{equation}
and compute its time-average after removing the transient response, that is,
\begin{equation}
\langle d\rangle=\frac{1}{1000}\int_{T-1000}^Td(t)\mathrm{d}t,
\end{equation}
where $T$ is the simulation time (selected as $T=6\times 10^5$ time units in our simulations and integrated using $\mathrm{ode45}$).

In what follows, we first report an observation on the role played by the integral layer on network synchronizability, to then validate the stability results derived in Section \ref{Results}.
\subsubsection{The integral action alone can lead to complete synchronization}
Different from the case of digraphs, where Laplacians only have nonnegative real-part eigenvalues, directed hypergraphs can be instead associated to Laplacians with one or more negative real-part eigenvalues, which constitute a further challenge for synchronizability.
Here, we illustrate how the integral action can effectively drive the network to complete synchronization even in the presence of negative real-part eigenvalues in the Laplacian $\widetilde L_P$ associated to the proportional layer. First, we start by noting that, as the coupling strengths are homogeneous within each layer, the values of $\varphi$ and $\varrho$ at which we need to evaluate the MSF for a given network, would be
$(\varphi=\sigma_P\lambda_i^P,\varrho=\sigma_I\lambda_i^I)$, for $i=2,\ldots,N$. This means that, if $\lambda_i^P$ has negative real-part for some $i$, then in order to attain complete synchronization, $\mathrm{MSF}(\varphi,\varrho)$ should be negative even for some negative real-part $\varphi$.

To illustrate the potential effectiveness of the integral action in dealing with the case of negative real-part eigenvalues $\lambda_i^P$, we start by plotting the $\mathrm{MSF}$ for network topologies characterized by real eigenvalues, so that the $\mathrm{MSF}$ can be effectively depicted by a colormap in 2-D, see Fig.~\ref{Fig.2}. Notice that the $\mathrm{MSF}$ is negative also for negative values of $\varphi$, thus implying that complete synchronization is feasible also when there are one or more eigenvalues $\lambda_i^P$ that are negative.

In Fig. \ref{Fig.6}, we show a concrete example of a $N=100$ network in which the proportional layer has a negative eigenvalue, equal to $-0.216$. Specifically, the topology is generated as an ER-like directed hypergraph with only triadic interactions, by following the procedure used in \cite{rizzello_pinning_2024}. The integral layer is selected as a complete graph, so that Theorem \ref{Theorem 3} does apply, and the $\mathrm{MSF}$ can be used to assess local transverse stability. Fig. \ref{Fig.6.a} shows a perfect matching between the theoretical prediction and the actual synchronizability region, whereby the contour corresponding to the pairs $(\sigma_P,\sigma_I)$ such that $\max_{i=2,\ldots,N}\mathrm{MSF}(\sigma_P\lambda_i^P,\sigma_I\lambda_i^I)=0$  correctly identifies the boundaries of the stability region. Moreover, it confirms that the integral action is able to counteract the destabilizing effect of the negative eigenvalue of $\widetilde L^P$, with increasing values of  $\sigma_I$ required as the coupling strength of the proportional layer $\sigma_P$ increases. Panels (b) and (c) report two representative instances in which the strength of the integral action is insufficient or sufficient to guarantee convergence of the disagreement to zero, respectively.

\begin{figure*} 
    \centering
    \begin{subfigure}[b]{0.32\textwidth}
        \centering
        \includegraphics[width=0.95\textwidth]{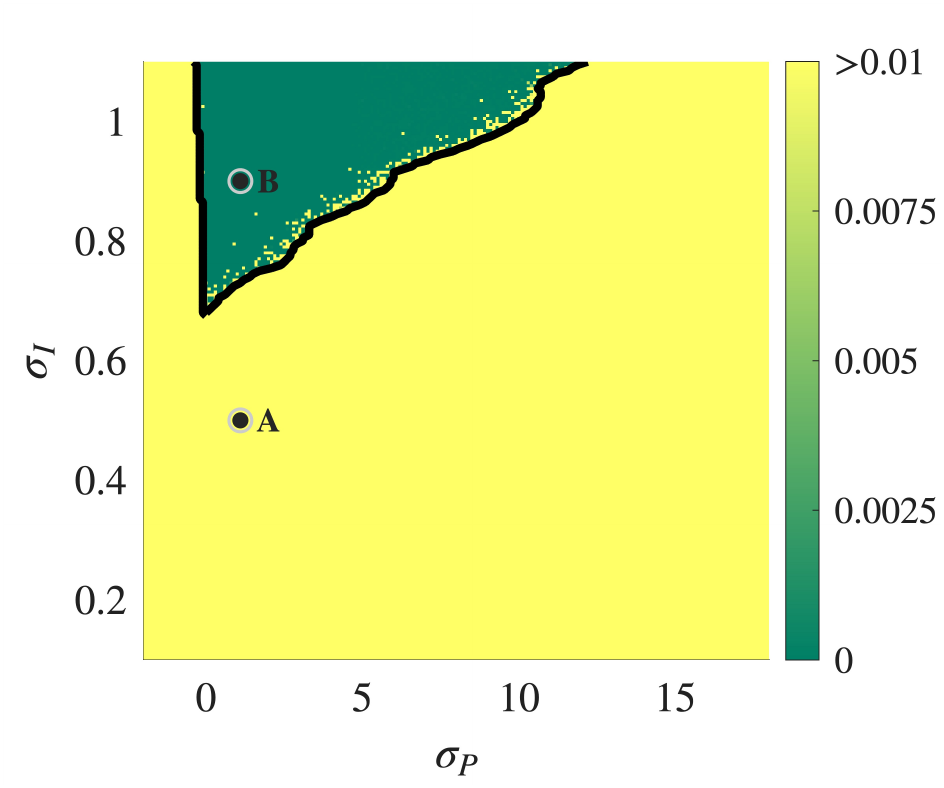}
        \caption{}
        \label{Fig.6.a}
    \end{subfigure}
    \hfill
    \begin{subfigure}[b]{0.32\textwidth}
        \centering
        \includegraphics[width=0.95\textwidth]{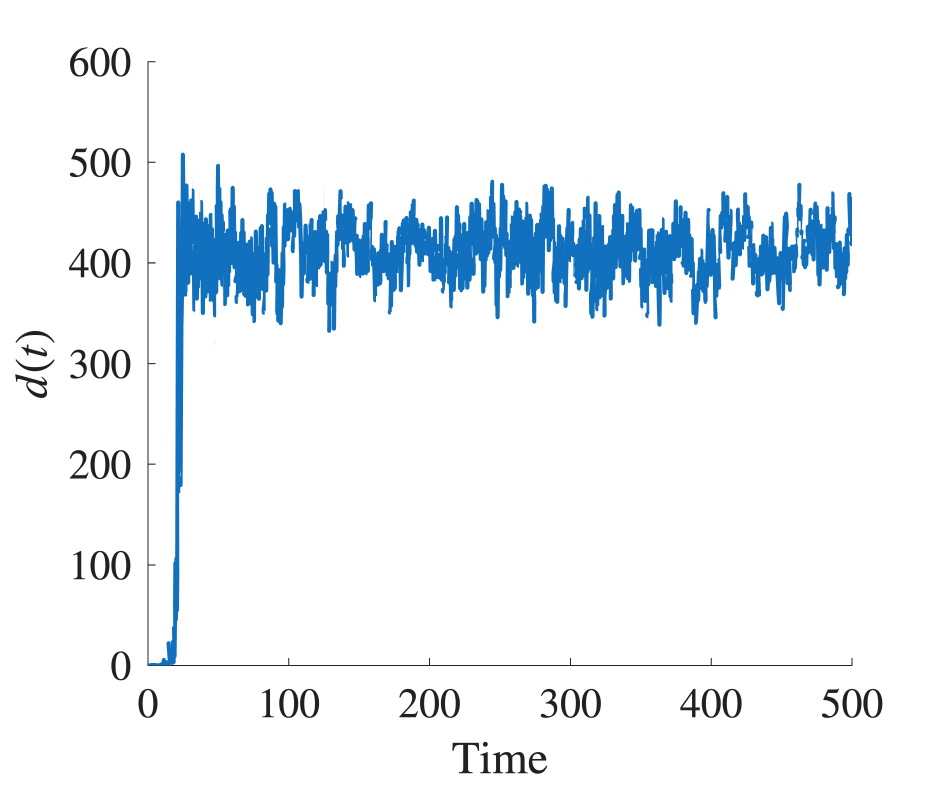}
        \caption{}
        \label{Fig.6.b}
    \end{subfigure}
    \begin{subfigure}[b]{0.32\textwidth}
        \centering
        \includegraphics[width=0.95\textwidth]{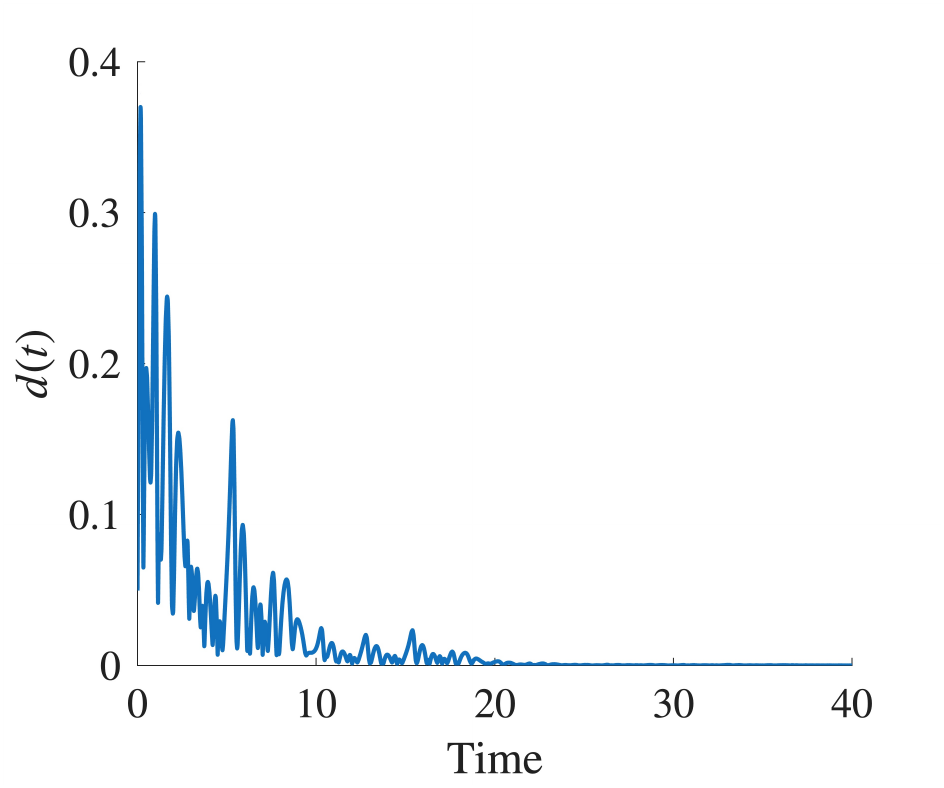}
        \caption{}
        \label{Fig.6.c}
    \end{subfigure}
    \caption{Network of $N=100$ chaotic Lorenz systems with coupling function $g(z)=[\tanh(z_1);0;0]$. The proportional layer is an ER-like directed hypergraph with only triadic interactions and average degree equal to 5, whereas the integral layer is a complete graph. 
    Panel (a) reports a colormap for the time-averaged disagreement $\langle d\rangle$ as a function of the proportional and integral coupling strength $(\sigma_P,\sigma_I)$. The solid black line identifies the boundary of the stability region obtained by computing the values of $\sigma_P$ and $\sigma_I$ for which $\max_{i=2,\ldots,N}\mathrm{MSF}(\sigma_P\lambda_i^P,\sigma_I\lambda_i^I)=0$. Panels (b) and (c) reports the time evolution of the disagreement $d(t)$ for the pairs $(\sigma_P,\sigma_I)$ corresponding to points $\mathrm{A}=(1.1,0.5)$ and $\mathrm{B}=(1.1,0.9)$ in panel (a), respectively. }
    \label{Fig.6}
\end{figure*}

Finally, to get a further insight on the shape of the $\mathrm{MSF}$, we now consider the case in which $\varphi$ and $\varrho$ are complex numbers, which is necessary to analyze instances where the two layers have Laplacians with complex spectra. In this case, the $\mathrm{MSF}$ would be function of 4 parameters, and therefore to visualize it, in Fig.~\ref{Fig.3} we fix the value of $\varphi$, thus obtaining 2-dimensional colormaps of the $\mathrm{MSF}$. In particular, in Fig.~\ref{Fig.3.a}, we show that, in the absence of a proportional coupling layer ($\varphi=0$), one can always find a minimal $\Re(\varrho)$ beyond which complete synchronization is attained, and this threshold increases with $\Im(\varrho)$, that is, it increases in the presence of complex eigenvalues in the integral layer. A similar behavior is also observed in the presence of a proportional control action, see Fig. \ref{Fig.3.b}. 
\begin{figure*} 
    \centering
    \begin{subfigure}[b]{0.48\textwidth}
        \centering
        \includegraphics[width=0.7\textwidth]{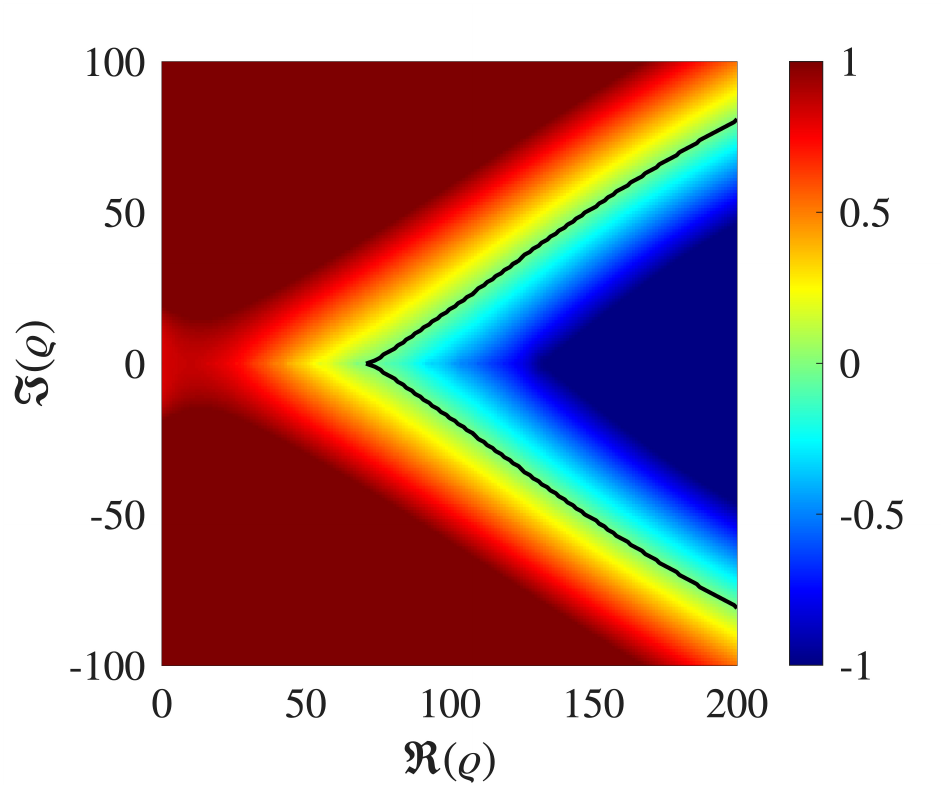}
        \caption{$\Re(\varphi)=\Im(\varphi)=0$}
        \label{Fig.3.a}
    \end{subfigure}
    \hfill
    \begin{subfigure}[b]{0.48\textwidth}
        \centering
        \includegraphics[width=0.7\textwidth]{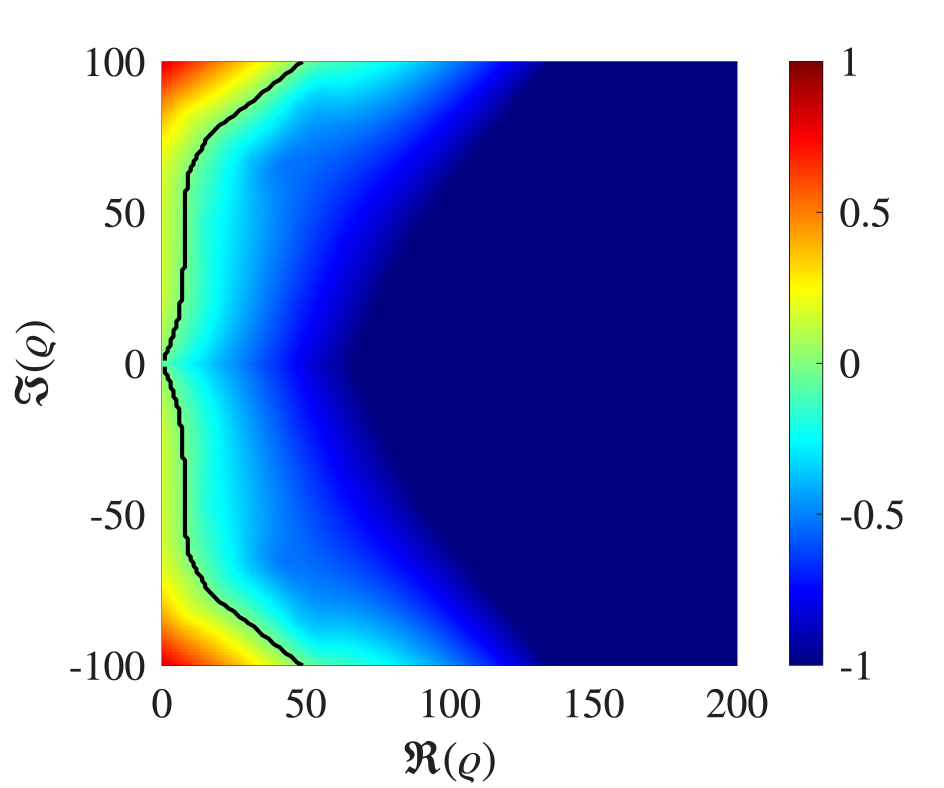}
        \caption{$\Re(\varphi)=10$ and $\Im(\varphi)=0$}
        \label{Fig.3.b}
    \end{subfigure}
    \caption{Master Stability Function $\mathrm{MSF}(\varphi, \varrho)$ for a network of chaotic Lorenz system with $\mathbf{J}_xg(0)=\mathrm{diag}([1,0,0]^\mathrm{T})$ as a function of $\varrho\in\mathbb C$, with $\varphi=0$ (panel a) and $\varphi=10$ (panel b). The solid black curve denotes the zero-contour of the MSF, delimiting the regions with local transverse stability.}
    \label{Fig.3}
\end{figure*}

\subsubsection{Validation of Theorem \ref{Theorem 2}}

Here, we start by considering the case in which Assumption \ref{ass:1} is fulfilled since there exists a transformation $T$ that puts the Laplacian in the form reported in \eqref{eq:lambdas}. Specifically, for the proportional and integral layer we consider two directed  hypergraphs of $N=6$ nodes, whose Laplacians $\widetilde{L}^P$ and $\widetilde{L}^I$, transformation $T$, and matrices $\Lambda_P$ and $\Lambda_I$ are reported in Fig. \ref{Fig.4.a}.
Fig. \ref{Fig.4.b} reports the time-averaged disagreement $\langle d\rangle$ showing how the transition to synchronization is accurately predicted through the use of the $\mathrm{MSF}$. The time-evolution of the disagreement is depicted for two representative pairs $(\sigma_P,\sigma_I)$ in panels (c) and (d). Similar results are obtained for larger ($N=100$) networks, as illustrated in Figure \ref{Fig.5}, where two identical ER-like directed hypergraphs with only triadic interactions are considered for the proportional and integral layers.

%
%
%

\begin{figure*} 
    \centering
    \begin{subfigure}[b]{0.95\textwidth}
        \centering
        \includegraphics[width=0.95\textwidth]{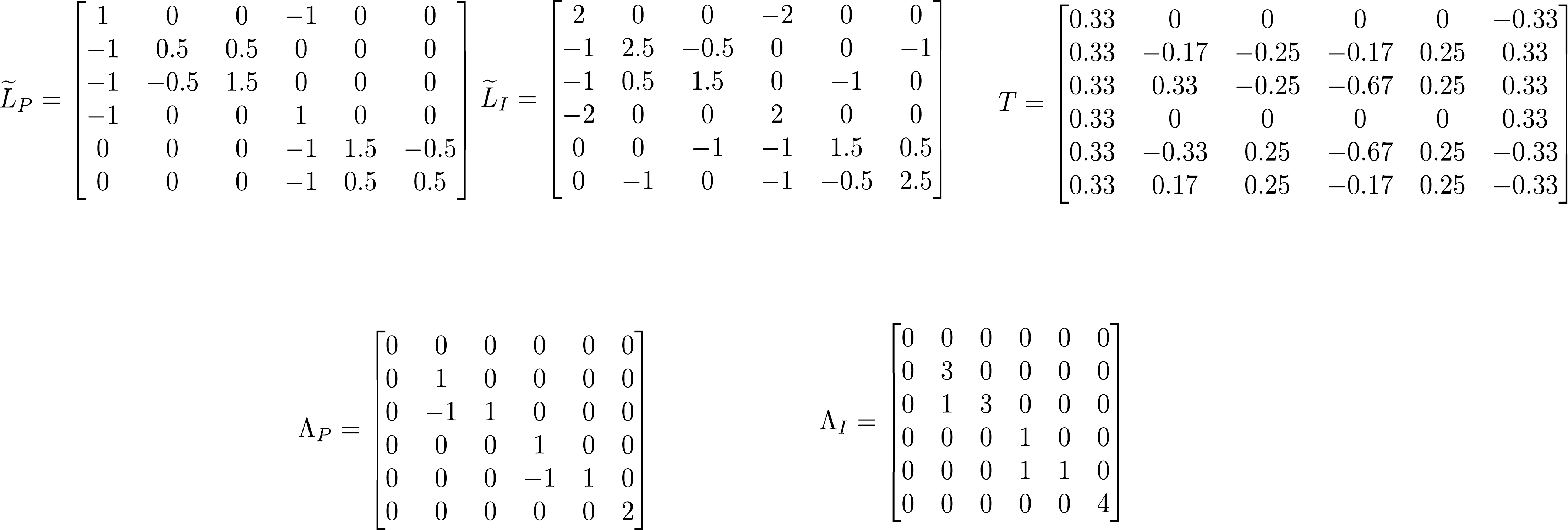}
        \caption{}
        \label{Fig.4.a}
    \end{subfigure}
    \hfill
    \begin{subfigure}[b]{0.32\textwidth}
        \centering
        \includegraphics[width=0.95\textwidth]{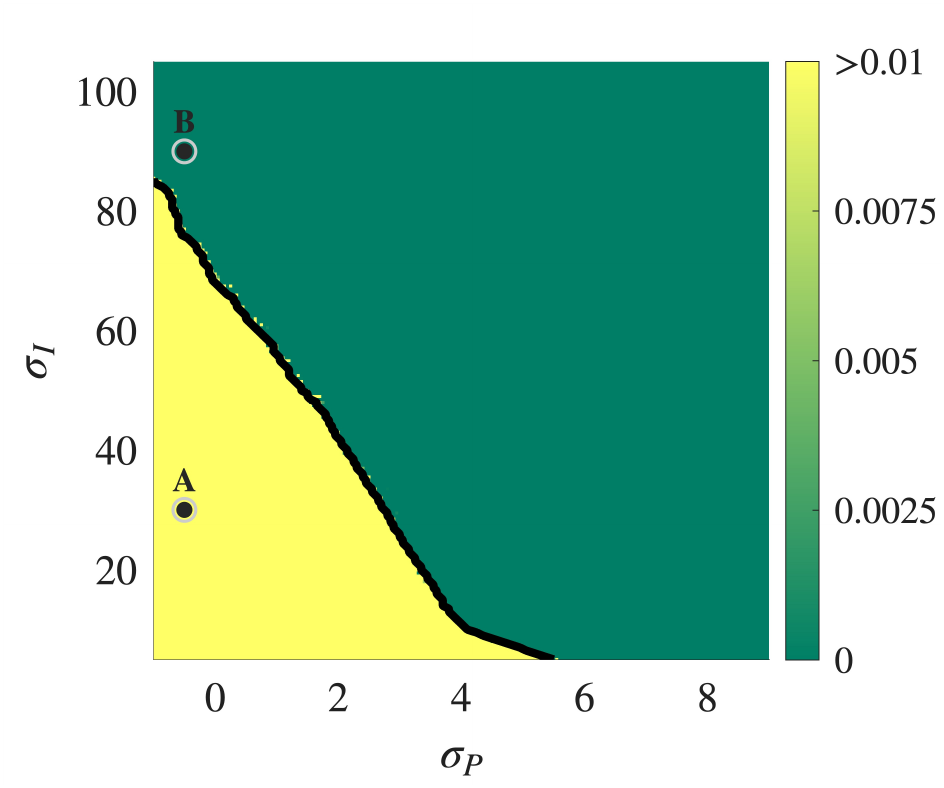}
        \caption{}
        \label{Fig.4.b}
    \end{subfigure}
    \hfill
    \begin{subfigure}[b]{0.32\textwidth}
        \centering
        \includegraphics[width=0.95\textwidth]{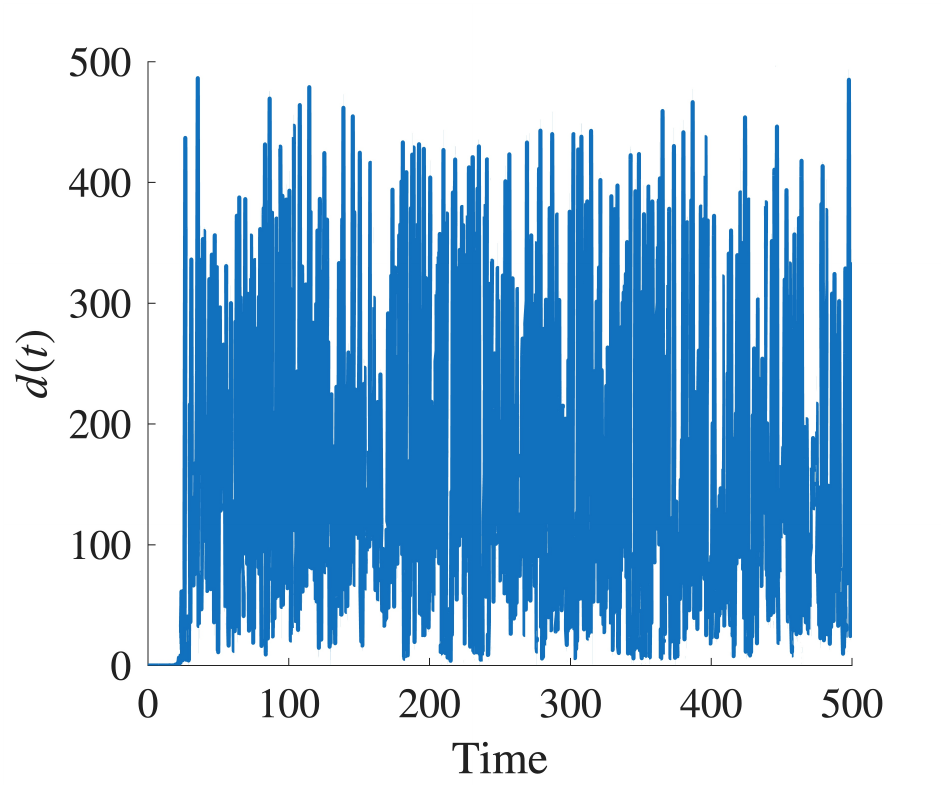}
        \caption{}
        \label{Fig.4.c}
    \end{subfigure}
    \begin{subfigure}[b]{0.32\textwidth}
        \centering
        \includegraphics[width=0.95\textwidth]{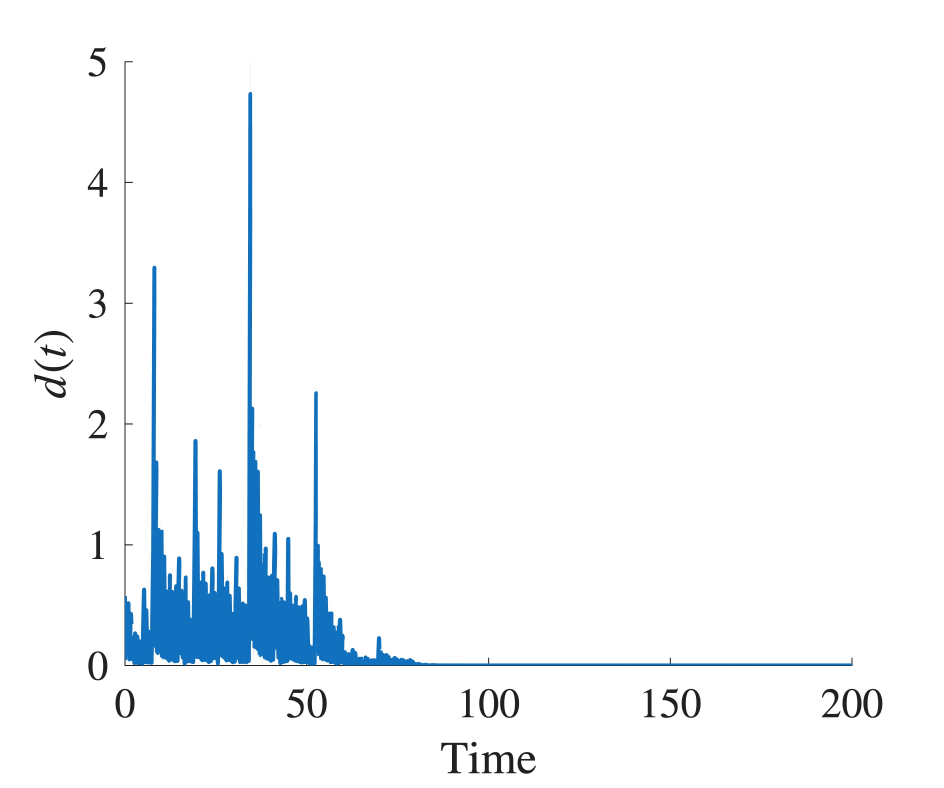}
        \caption{}
        \label{Fig.4.d}
    \end{subfigure}
    \caption{Network of chaotic Lorenz systems with coupling function $g(z)=[z_1;0;0]$. Panel (a) reports the Laplacians $\widetilde L_P$, $\widetilde L_I$, the transformation $T$, and matrices $\Lambda_P$ and $\Lambda_I$.
    Panel (b) reports a colormap for the time-averaged disagreement $\langle d\rangle$ as a function of the proportional and integral coupling strength $(\sigma_P,\sigma_I)$. The solid black line identifies the boundary of the stability region obtained by computing the values of $\sigma_P$ and $\sigma_I$ for which $\max_{i=2,\ldots,N}\mathrm{MSF}(\sigma_P\lambda_i^P,\sigma_I\lambda_i^I)=0$. Panels (c) and (d) reports the time evolution of the disagreement $d(t)$ for the pairs $(\sigma_P,\sigma_I)$ corresponding to points $\mathrm{A}=(-0.5,30)$ and $\mathrm{B}=(-0.5,90)$ in panel (a), respectively.}
    \label{Fig.4}
\end{figure*}

\begin{figure*} 
    \centering
    \begin{subfigure}[b]{0.32\textwidth}
        \centering
        \includegraphics[width=0.95\textwidth]{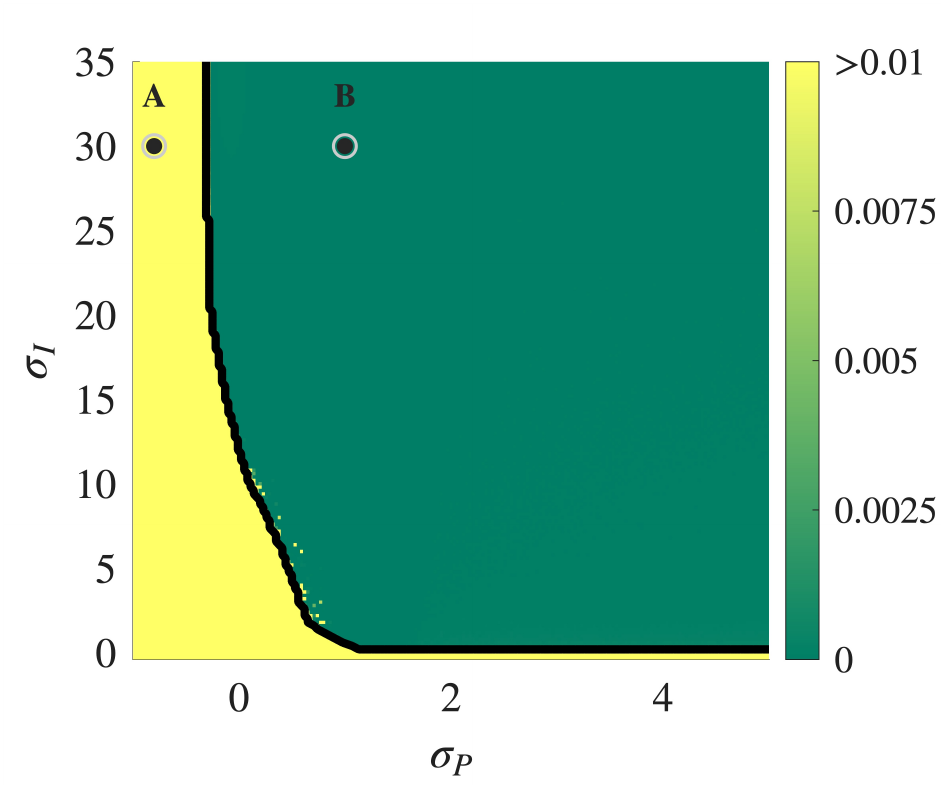}
        \caption{}
        \label{Fig.5.a}
    \end{subfigure}
    \hfill
    \begin{subfigure}[b]{0.32\textwidth}
        \centering
        \includegraphics[width=0.95\textwidth]{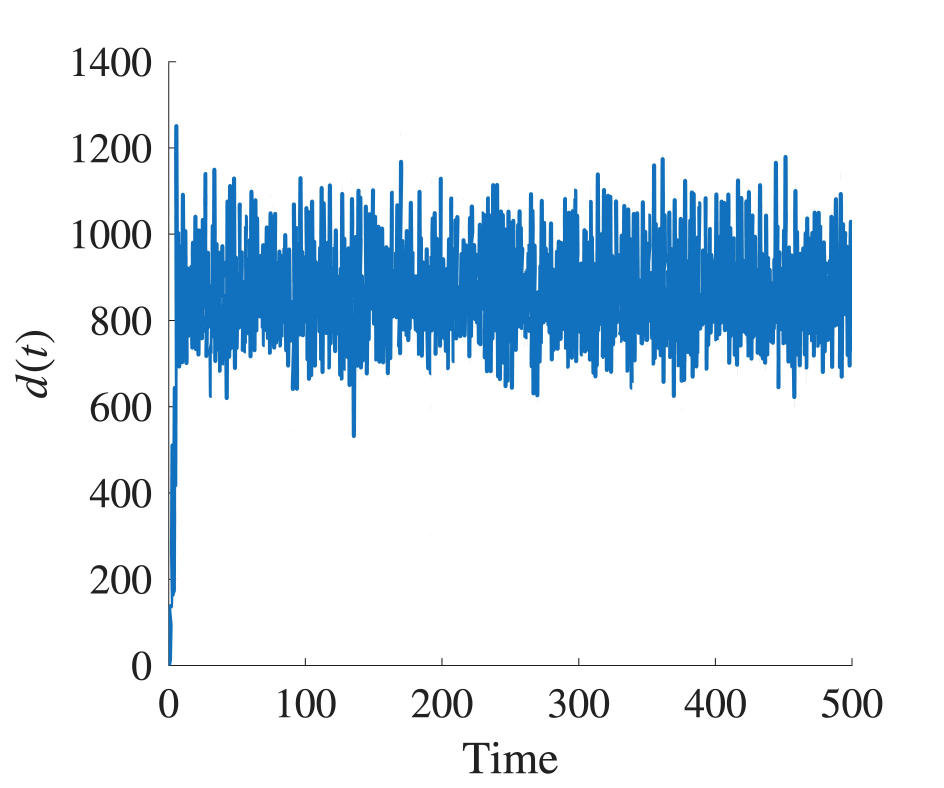}
        \caption{}
        \label{Fig.5.b}
    \end{subfigure}
    \begin{subfigure}[b]{0.32\textwidth}
        \centering
        \includegraphics[width=0.95\textwidth]{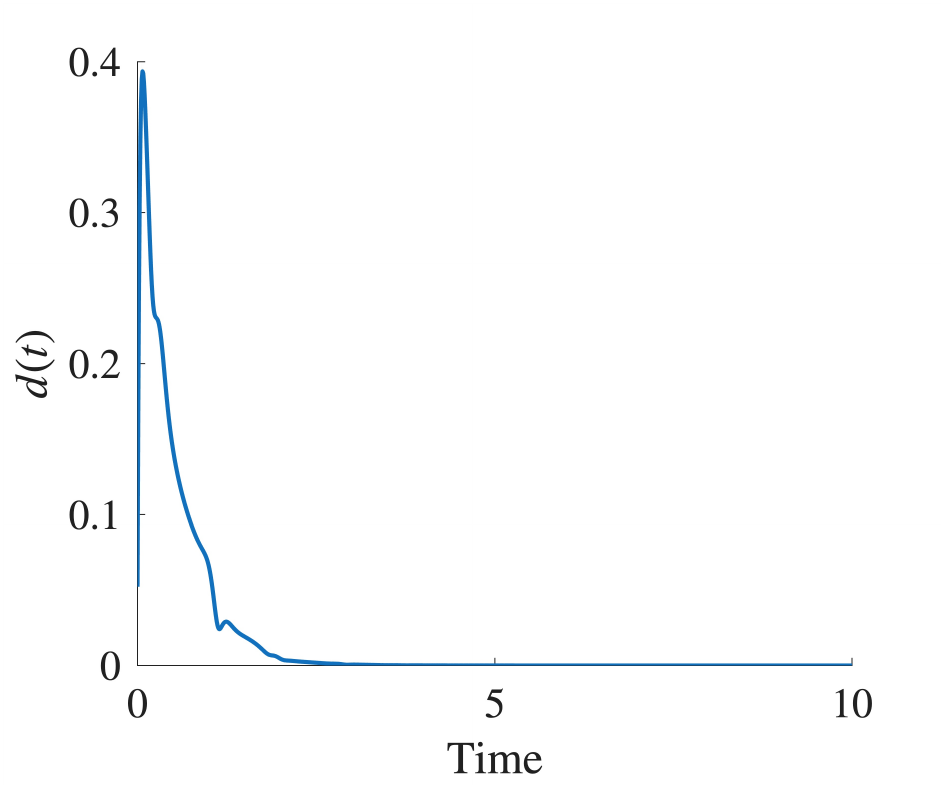}
        \caption{}
        \label{Fig.5.c}
    \end{subfigure}
    \caption{Network of $N=100$ chaotic Lorenz systems with coupling function $g(z)=[\tanh(z_1);0;0]$. The proportional and integral layer are coincident, and correspond to  an ER-like directed hypergraph with only triadic interactions and average degree equal to 20. 
    Panel (a) reports a colormap for the time-averaged disagreement $\langle d\rangle$ as a function of the proportional and integral coupling strength $(\sigma_P,\sigma_I)$. The solid black line identifies the boundary of the stability region obtained by computing the values of $\sigma_P$ and $\sigma_I$ for which $\max_{i=2,\ldots,N}\mathrm{MSF}(\sigma_P\lambda_i^P,\sigma_I\lambda_i^I)=0$. Panels (b) and (c) reports the time evolution of the disagreement $d(t)$ for the pairs $(\sigma_P,\sigma_I)$ corresponding to points $\mathrm{A}=(-0.8,30)$ and $\mathrm{B}=(1,30)$ in panel (a), respectively.}
    \label{Fig.5}
\end{figure*}

\subsection{An opinion dynamics application}
To demonstrate the advantage of integral action, we consider a numerical case study adapted from \cite{rizzello_pinning_2024}, involving a real-world social network topology comprising $N=150$ agents, which we use as the directed hypergraph describing both for the proportional and the integral coupling layer. The scenario models a binary-choice consensus process where individuals are influenced by a central opinion leader, hereafter referred to as the pinner \cite{de_lellis_pinning_2023}. Unlike the purely diffusive coupling explored in previous literature, which can only bound the opinion mismatch with the leader, we implement our dynamic PI-coupling scheme to manage nodal heterogeneities and attain complete leader-follower consensus.

The state $x_i \in \mathbb{R}$ represents the opinion of agent $i$, where $x_i = 0$ signifies a neutral stance, and $x_i > 0$ or $x_i < 0$ indicate an inclination toward one of the two competing options. The uncoupled dynamics of each agent are governed by the following nonlinear function:
\begin{equation}
\label{eq:opinion}
    f(x_i, \mu_i) = -3x_i + \mu_i \tanh(x_i).
\end{equation}
For $\mu_i > 3$, the origin becomes unstable, giving rise to two symmetric stable equilibria at $\pm x_i^*$. In this context, each hyperedge with cardinality $\mathcal T(e)+\mathcal H(e)\ge 3$ represents a higher-order group interaction rather than a simple pairwise link \cite{stehle_high-resolution_2011}. We assume the proportional and integral layers share the same topology ($\mathcal{E}_P = \mathcal{E}_I = \mathcal{E}$) with unit weights $(\sigma_e)_P = (\sigma_e)_I = 1$. To incorporate heterogeneity, the parameters $\mu_i$ are sampled from a uniform distribution $[3.2, 4.8]$. 
The pinner, acting as a opinion leader with $\mu_{N+1} = 4$, attempts to steer the collective towards a target state by pinning 10 randomly selected nodes via five directed triadic interactions.
We denote with $\delta_i(t)=x_i(t)-x_{N+1}(t)$ the opinion mismatch of the $i$th node with respect to the pinner, and define the overall opinion mismatch as $\delta(t)=\|[\delta_1,\ldots,\delta_N]\|$.

\begin{figure*} 
    \centering
    \begin{subfigure}[b]{0.32\textwidth}
        \centering
        \includegraphics[width=1\textwidth]{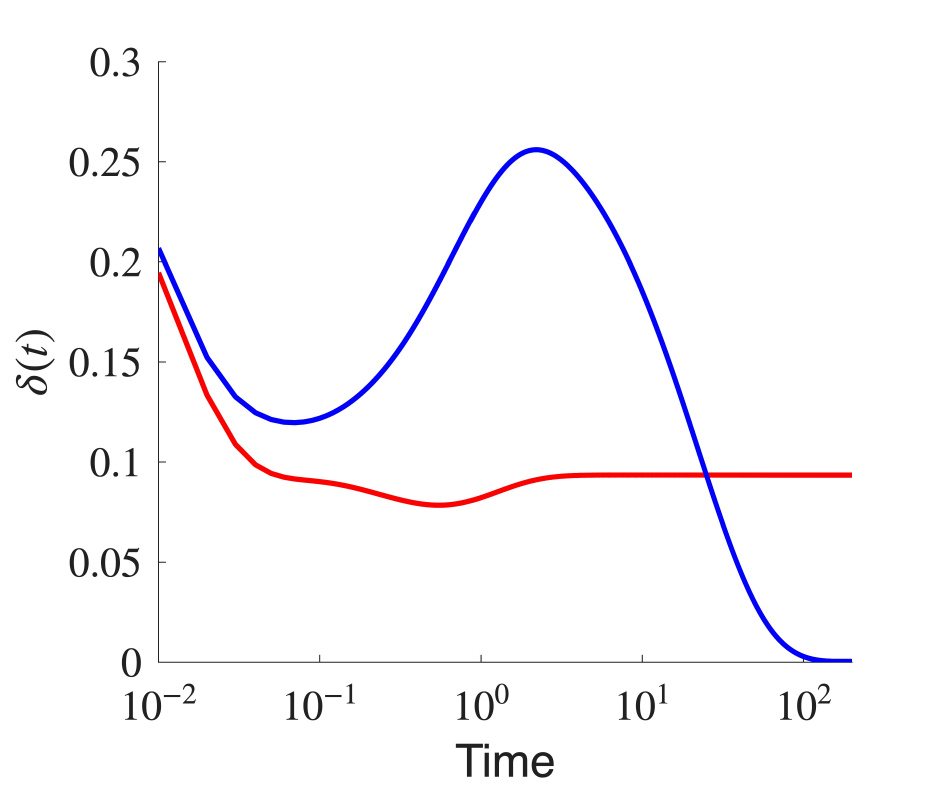}
        \caption{}
        \label{Fig.7.a}
    \end{subfigure}
    \hfill
    \begin{subfigure}[b]{0.32\textwidth}
        \centering
        \includegraphics[width=1\textwidth]{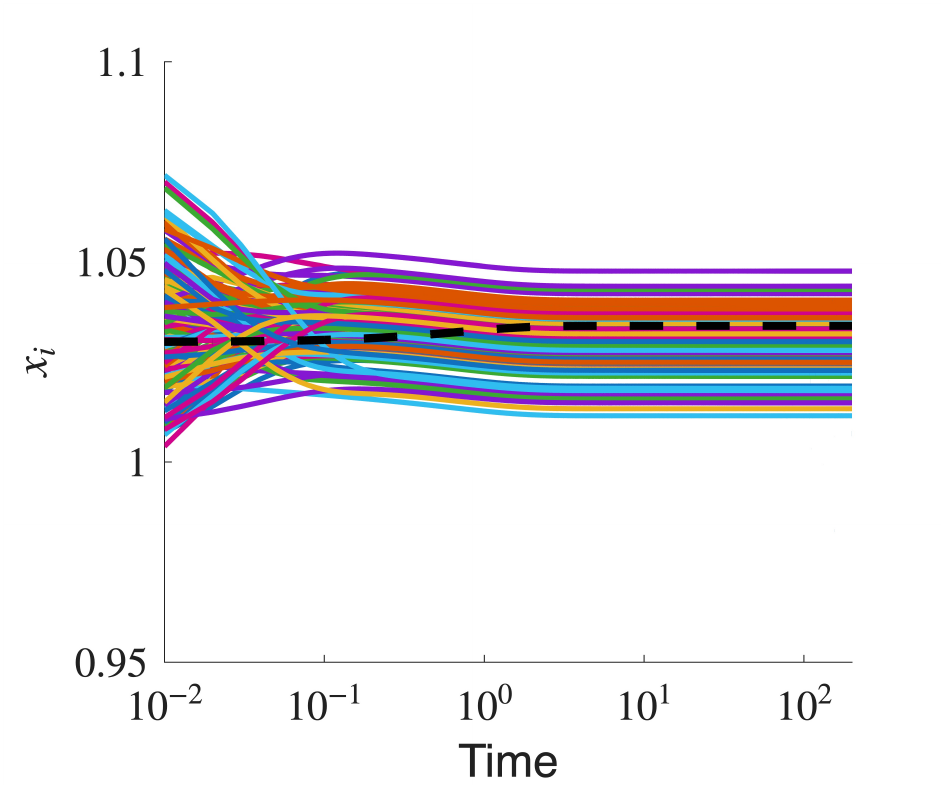}
        \caption{}
        \label{Fig.7.b}
    \end{subfigure}
    \begin{subfigure}[b]{0.32\textwidth}
        \centering
        \includegraphics[width=1\textwidth]{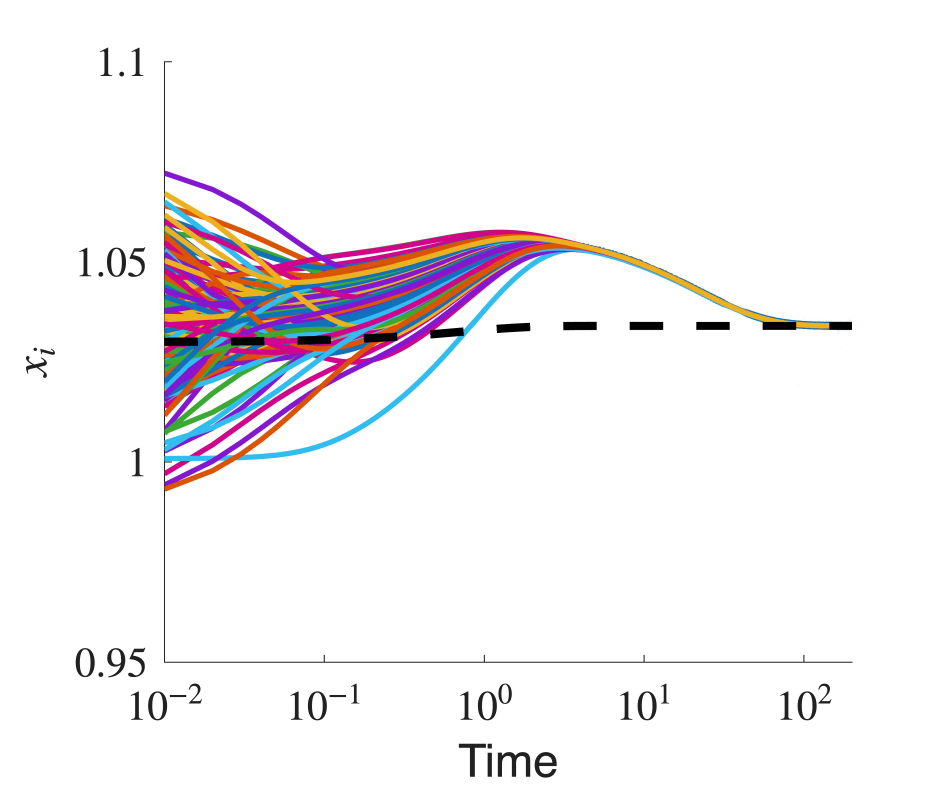}
        \caption{}
        \label{Fig.7.c}
    \end{subfigure}
    \caption{Opinion dynamics for a group of $N=150$ individuals according to the individual vector field \eqref{eq:opinion} coupled with the directed hypergraph of both the proportional and integral layer selected as the real-world social network from \cite{rizzello_pinning_2024}, and with a coupling protocol $g(x)=x$. Panel (a) reports the time evolution of the overall opinion mismatch $\delta(t)$ in the absence (red line) or in the presence (blue line) of the integral coupling layer. Panels (b) and (c) report the time evolution of the agent opinion in the absence and in the presence of the integral action, respectively, with a dashed black line identifying the opinion of the leader.}
    \label{Fig.7}
\end{figure*}

Fig.~\ref{Fig.7} illustrates the opinion dynamics under the influence of a pinning leader in the presence of dynamic diffusive coupling. 
In Fig.~\ref{Fig.7.a}, the temporal evolution of the overall opinion mismatch, $\delta(t)$, is compared for the cases of purely diffusive and dynamic (PI) diffusive coupling. It is evident that the inclusion of integral action effectively eliminates the persistent steady-state error inherent to purely diffusive coupling, and also reported in \cite{rizzello_pinning_2024}. The corresponding opinion trajectories $x_i(t)$ are presented in Figs.~\ref{Fig.7.b} and \ref{Fig.7.c}. While diffusive coupling results in a bounded but non-zero synchronization error, the dynamic coupling scheme facilitates asymptotic convergence of all agent opinions to the specific opinion of the pinner, thereby achieving complete leader-follower consensus.

Notice that, for the opinion dynamics model \eqref{eq:opinion}, 
\(J_\mu f(x,\mu)=\tanh(x)\) is not constant. Hence, the sufficient
condition used above to guarantee invariance along a generic
time-varying synchronous trajectory is not satisfied. This does not
contradict the numerical results in Fig. \ref{Fig.7}, since the pinning problem
considered here concerns convergence to a constant leader opinion.
At the limiting consensus equilibrium \(x_1=\ldots=x_{N+1}=s^\ast\), the
mismatch contribution \(\tanh(s^\ast)\delta\mu\) is constant and can
therefore be compensated by the integral action.

\section{Conclusion}

This work addressed the problem of achieving complete synchronization in directed hypergraphs of nonidentical nonlinear agents, where heterogeneity arises from parameter mismatches. In contrast with standard diffusive coupling, which under parametric mismatches generally renders a non-invariant synchronization manifold and therefore prevents exact synchronization, we showed that dynamic diffusive coupling can restore complete synchronization under precise structural conditions. In particular, the analysis revealed that complete synchronization is possible only when the parameter mismatches enter the transverse dynamics as constant additive disturbances, so that they can be compensated by the integral component of the coupling protocol. This highlights both the potential and the intrinsic limitations of dynamic coupling strategies in higher-order nonlinear networks.

By exploiting the correspondence between directed hypergraphs and signed graph representations, we derived necessary and sufficient conditions for the local invariance of the synchronization manifold. We then developed a Master Stability Function framework tailored to directed hypergraphs with proportional-integral hyperdiffusive coupling. This framework provides tractable criteria for assessing the local transverse stability of the synchronous solution in terms of the spectra of the Laplacian matrices associated with the proportional and integral layers. The results also clarify the role of the integral layer: beyond compensating constant mismatches, it can enlarge the synchronizability region and, in some cases, overcome destabilizing effects associated with negative real-part eigenvalues in the hypergraph Laplacian.

The theoretical predictions were validated through numerical simulations on directed hypergraphs of Lorenz oscillators subject to heterogeneous constant disturbances, where the MSF accurately identified the transition between synchronized and nonsynchronized regimes. Further simulations on a pinning-control problem in nonlinear opinion dynamics demonstrated that the proposed dynamic coupling scheme eliminates the persistent steady-state errors observed under purely diffusive coupling, enabling complete leader-follower consensus in the presence of nodal heterogeneity.

Overall, the results provide a systematic framework for analyzing and designing synchronization strategies in higher-order networks with dynamic diffusive interactions. Future work may extend the approach to broader classes of parameter mismatches, time-varying or adaptive hypergraph topologies, delays, stochastic perturbations, and more general forms of nonlinear coupling, as well as investigate less restrictive spectral conditions under which the MSF-based analysis remains applicable.


\section*{Acknowledgments}

A. Thomas Vadakkan  wishes to thank  Davide Salzano and Roberto Rizzello, University of Naples Federico II, for the useful discussions. During the preparation of this manuscript, the authors used AI-based tools to improve readability and language.

This work was supported by the European Union under Horizon Europe, grant number 101120085, Project “Higher-Order Networks and Dynamics (BEYOND THE EDGE)” - https://www.beyondtheedge.network/.








\printcredits

\bibliographystyle{elsarticle-num}

\bibliography{Reference}



\end{document}